\documentclass[12pt]{amsart}
\usepackage{a4wide,amsmath,amssymb,graphicx}
\usepackage{color}

\let\pa\partial
\let\na\nabla
\let\eps\varepsilon
\newcommand{\R}{\mathbb{R}}
\newcommand{\N}{\mathbb{N}}
\newcommand{\diver}{\textnormal{div}}

\newtheorem{theorem}{Theorem}
\newtheorem{proposition}[theorem]{Proposition}
\newtheorem{lemma}[theorem]{Lemma}
\theoremstyle{remark}

\begin{document}
\title[Two spinorial drift-diffusion models]{Two spinorial drift-diffusion models
for quantum electron transport in graphene}

\author{Nicola Zamponi}
\address{Dipartimento di Matematica Ulisse Dini, Viale Morgagni 67/A, 
Firenze, Italy}
\email{nicola.zamponi@math.unifi.it}

\author{Ansgar J\"ungel}
\address{Institute for Analysis and Scientific Computing, Vienna University of  
	Technology, Wiedner Hauptstra\ss e 8--10, 1040 Wien, Austria}
\email{juengel@tuwien.ac.at} 

\date{\today}

\thanks{The second author acknowledges partial support from   
the Austrian Science Fund (FWF), grants P20214, P22108, and I395, and
from the Austrian-French Project of the Austrian Exchange Service (\"OAD).} 

\begin{abstract}
Two drift-diffusion models for the quantum transport of electrons in gra\-phene,
which account for the spin degree of freedom, are derived from a spinorial
Wigner equation with relaxation-time or mass- and spin-conserving matrix collision
operators using a Chapman-Enskog expansion around the thermal equilibrium.
Explicit models are computed by assuming that both the semiclassical
parameter and the scaled Fermi energy are sufficiently small.
For one of the models, the global existence of weak solutions, entropy-dissipation
properties, and the exponential long-time decay of the spin vector are proved.
Finally, numerical simulations of a one-dimensional ballistic diode using both models are presented, showing the temporal behavior of the particle density and the 
components of the spin vector.

\end{abstract}

% \paragraph{Keywords:}  
\keywords{Wigner equation, semiclassical limit, Chapman-Enskog expansion,
spinorial drift-diffusion equations, existence of solutions, 
long-time behavior of solutions, entropy dissipation, graphene.}  
 
% \paragraph{AMS classification:}  
\subjclass[2000]{35K51, 35B40, 82D37.}  

\maketitle

%%%%%%%%%%%%%%%%%%%%%%%%%%%%%%%%%%%%%%%%%%%%%%%%%%%%%%%%%%%%%%%%%%%%%%%%%%%%%%

\section{Introduction}

Graphene is a new semiconductor material, which is a subject of great interest 
for nanoscale electronic applications. The reason for this interest is
due to the very remarkable properties of graphene,
such as the high electron mobility and long coherence length. Therefore,
graphene is a promising candidate for the construction of a new generation of 
electronic devices with far better performances than current silicon
devices \cite{Fre08}. 
Potential applications include, for instance, 
spin field-effect transistors \cite{JPTJW08,XFLA10},
extremely sensitive gas sensors \cite{SGMHBKN07},
one-electron graphene transistors \cite{PSKYHNG08},
and graphene spin transistors \cite{CCF07}.

Physically,
graphene is a two-dimensional semiconductor with a zero-width band gap,
consisting of a single layer of carbon atoms arranged in a honeycomb lattice.
In the energy spectrum, the valence band intersects the conduction band at some
isolated points, called the Dirac points. Around these points, 
quasiparticles in graphene exhibit the linear dispersion relation $E=v_F|p|$,
where $p$ denotes the crystal momentum and $v_F\approx 10^6$\,m/s is the Fermi
velocity \cite{MNG08}. This energy spectrum resembles the Dirac spectrum
for massless relativistic particles, $E=c|p|$, where $c$ is the speed of light.
Hence, the Fermi velocity $v_F\approx c/300$ takes the role of the speed of light.
The system Hamiltonian can be approximated near a Dirac point, 
for low energies and in the absence of a potential, by the Dirac-like operator
\begin{equation}\label{H0}
  H_0 = -i\hbar v_F\left(\sigma_1\frac{\pa}{\pa x_1}+\sigma\frac{\pa}{\pa x_2}\right),
\end{equation}
where $\hbar$ is the reduced Planck constant, and $\sigma_1$ and $\sigma_2$ are
the Pauli matrices (see \eqref{pauli}). 

In order to understand and predict the charge carrier transport in graphene, 
transport models, which incorporate the spin degree of freedom, have to be devised. 
Theoretical models for spin-polarized transport involve fluid-type
drift-diffusion equations, kinetic transport equations, and Monte-Carlo
simulation schemes; see the references in \cite{PSP05}. A hierarchy of
fluiddynamic spin models was derived from a spinor Boltzmann transport equation 
in \cite{BeEl09}. Suitable matrix collision operators
were suggested and analyzed in \cite{PoNe11}. Drift-diffusion models for spin
transport were considered in several works; see, e.g., \cite{BaMe10,ElH08,Sai04}.
A mathematical analysis of spin drift-diffusion systems for the band densities
is given in \cite{Gli08}.

Fluiddynamic equations provide a compromise between physical accuracy and
numerical efficiency. Another advantage is that they contain already
the physically interesting quantities, such as the particle density, momentum,
and spin densities, whereas other models usually involve variables which do not
have an immediate physical interpretation, like wavefunctions, density operators,
and Wigner distributions. In the latter case, further computations have to be
made to obtain the quantities of physical interest.

In this work, we address the quantum kinetic and diffusion level of spin-polarized
transport in graphene. More precisely, starting from a spinorial Wigner equation,
we aim to derive via a moment method and a Chapman-Enskog expansion
macroscopic drift-diffusion
models for the particle density and spin vector. Furthermore, we prove the
global existence of weak solutions to one of these models and we illustrate
the behavior of the solutions in a ballistic diode by numerical experiments..

We note that there are only very few articles concerned with kinetic or
macroscopic transport models for graphene. In the physics literature, 
the focus is on transport properties such as the carrier mobility \cite{Gui08},
charged impurity and phonon scattering \cite{DAHR11}, 
and Klein tunneling \cite{Per09}. Wigner models were investigated in \cite{MoSc11}.
Starting from a Wigner equation, hydrodynamic spin models were derived in
\cite{ZaBa11}, and the work \cite{Zam12} is concerned with the derivation
of drift-diffusion models for the band densities. In contrast, we will work in
the present paper with all components of the spin vector. Furthermore,  
we provide a mathematical analysis of one of the models
and numerical simulations of both models.

In the following, we describe our approach and the main results.
>From the unique features of graphene follow that Fermi-Dirac statistics would be
more suitable to describe quantum transport in the material than Maxwell-Boltzmann
statistics, since the energy spectrum of the Hamiltonian \eqref{H0} is not
bounded from below. We overcome this problem by modifying the Hamiltonian $H_0$.
In fact, we assume that the system Hamiltonian is approximated by
the following operator which is bounded from below:
\begin{equation}\label{H}
  H = H_0 - \left(\frac{\hbar^2}{2m}\Delta \right)\sigma_0,
\end{equation}
where $m>0$ is a parameter with the dimension of a mass and $\sigma_0$ is the
unit matrix in $\R^{2\times 2}$ (see \eqref{pauli}).
This is not a very restrictive assumption since the operator \eqref{H0} is itself
only an approximation of the correct system Hamiltonian, valid for small values
of the momentum $|p|$. 

Starting from the Hamiltonian $H+V\sigma_0$, where $V$ is the electric potential,
Wigner equations were derived from the Von-Neumann equation in \cite{ZaBa11}.
In order to derive diffusion models, we consider two types of collision
operators in the Wigner model. 

First, we employ relaxation-time terms of
BGK-type, $Q(w) = (g-w)/\tau_c$, where $w$ is the Wigner distribution,
$\tau_c>0$ is the mean free path,
and $g$ is the thermal equilibrium distribution derived from the quantum minimum
entropy principle \cite{DeRi03}; see Sections \ref{sec.kinetic} and \ref{sec.g}. 
We assume that the wave energy is much smaller
than the typical kinetic energy (semiclassical hypothesis) and that the
the scaled Planck constant is of the same order as the scaled Fermi energy
(low scaled Fermi speed hypothesis). Performing a diffusive limit and a
Chapman-Enskog expansion around the equilibrium distribution formally 
yields the first quantum spin diffusion model (QSDE1)
for the particle density $n_0$ and the spin vector $\vec n=(n_1,n_2,n_3)$,
which are the zeroth-order moments of the Wigner distribution $w$:
\begin{align*}
  & \pa_t n_0 - \diver\,J_0 = 0, \quad J_0 = \na n_0 + n_0 \na V, \\
  & \pa_t n_j - \diver\,J_j = F_j, \quad 
  J_j = A_0(\vec n/n_0)\na\vec n + \vec n\otimes\na V + B_0(n_0,\vec n), \quad
  j=1,2,3,
\end{align*}
where $A_0$ and $B_0$ are some functions,
and $F_j$ depends on $n_0$, $\vec n$, $\na\vec n$, and $\na V$.
We refer to Section \ref{sec.model1} for details.
Note that the equations for 
the particle density and spin vector decouple; only the spin vector
equation depends nonlinearly on $n_0$ and $\vec n$. 
The functions $A_0$ and $B_0$ are well defined only if $0\le|\vec n|/n_0<1$. Hence,
the main difficulty in the analysis of this model is the proof of lower and
upper bounds for $|\vec n|/n_0$.

Second, we employ a mass- and spin-conserving matrix collision operator suggested
in \cite{PoNe11} for a semiconductor subject to a magnetic field. Performing
a diffusive limit and a Chapman-Enskog expansion similarly as for the first model, 
we derive the second quantum spin diffusion model (QSDE2) in which the equations
for the particle density $n_0$ and the spin vector $\vec n$ are fully coupled:
\begin{align*}
  & \pa_t n_0 - \diver\,J_0 = 0, \quad 
  J_0 = A_1(\na n_0 + n_0\na V) + B_1\cdot(\na\vec n+\vec n\otimes\na V)
  + C_1(n_0,\vec n), \\
  & \pa_t n_j - \diver\,J_j = G_j, \quad
  J_j = A_2(\na n_0 + n_0\na V) + B_2\cdot(\na\vec n+\vec n\otimes\na V)
  + C_2(n_0,\vec n),
\end{align*}
where $A_j$, $B_j$, and $C_j$ are some functions depending also on the (given)
pseudo-spin polarization and the direction of the local pseudo-magnetization,
and $G_j$ depends on $\vec n$ and $J_j$. We refer to Section \ref{sec.model2}
for details. Because of the cross-diffusion structure, the analysis of this
model is not immediate, and we solve this model only numerically (see Section
\ref{sec.num}).

Thanks to the decoupled structure of the model QSDE1, 
we are able to perform an analytical study. More precisely, 
we show in Section \ref{sec.anal} the global existence and uniqueness
of weak solutions, some entropy-dissipation properties, 
and the exponential long-time decay
of the spin vector. As mentioned above, the main challenge is the proof 
of $|\vec n|/n_0<1$. By the maximum principle, it is not difficult
to prove that $n_0$ is strictly positive. However, an application of the
maximum principle to the equation for the spin vector is less obvious.
Our idea is to show that $u=1-|\vec n|^2/n_0^2$ satisfies the equation
$$
  \pa_t u - \Delta u - \na(\log n_0+V)\cdot\na u = 2G[\vec n/n_0],
$$
where $G[\vec n/n_0]$ is some nonnegative function.
This simple structure comes from the fact that
certain antisymmetric terms in $A_0$ and $B_0$ cancel in this situation.
By Stampacchia's truncation method, we conclude that there exists a positive lower 
bound for $u$ which proves that $|\vec n|/n_0<1$.

Finally, we present in Section \ref{sec.num} some numerical results for
the models QSDE1 and QSDE2, applied to a simple ballistic diode in one space
dimension. The equations are discretized by a Crank-Nicolson finite-difference
method. We illustrate the behavior of the particle density
$n_0$ and the spin components $n_j$ for various instants of time and
the exponential convergence of the particle density to the steady state. 

The paper is organized as follows. Section \ref{sec.model} is concerned with
the derivation of the models QSDE1 and QSDE2. The model QSDE1 is analyzed in
Section \ref{sec.anal}. Finally, numerical experiments are presented in 
Section \ref{sec.num}.

%%%%%%%%%%%%%%%%%%%%%%%%%%%%%%%%%%%%%%%%%%%%%%%%%%%%%%%%%%%%%%%%%%%%%%%%%%%%%%

\section{Modeling}\label{sec.model}

\subsection{A kinetic model for graphene}\label{sec.kinetic}

We describe the kinetic model for the quantum transport in graphene associated
to the Hamiltonian $H+V\sigma_0$, where $H$ is given by \eqref{H}, and $V$ is
the electric potential. Let $w(x,p,t)$
denote the system Wigner distribution, depending on the position $x\in\R^2$, 
momentum $p\in\R^2$, and time $t\ge 0$. The Wigner function takes values in the
space of complex Hermitian $2\times 2$ matrices, which is an Hilbert space
with respect to the scalar product $(A,B)=\frac12\mbox{tr}(AB)$, where 
$\mbox{tr}(A)$ denotes the trace of the matrix $A$. The set of Pauli matrices
\begin{equation}\label{pauli}
  \sigma_0 = \begin{pmatrix} 1 & 0 \\ 0 & 1 \end{pmatrix}, \quad
  \sigma_1 = \begin{pmatrix} 0 & 1 \\ 1 & 0 \end{pmatrix}, \quad
  \sigma_2 = \begin{pmatrix} 0 & -i \\ i & 0 \end{pmatrix}, \quad
  \sigma_3 = \begin{pmatrix} 1 & 0 \\ 0 & -1 \end{pmatrix}
\end{equation}
is a complete orthonormal system on that space. Therefore, we can develop
the Wigner function $w$ in terms of the Pauli matrices, $w=\sum_{j=0}^3 w_j\sigma_j$,
where $w_j(x,p,t)$ are real-valued scalar functions. We set
$\vec w=(w_1,w_2,w_3)$, $\vec p=(p_1,p_2,0)$, $p=(p_1,p_2)$, 
$\vec\sigma=(\sigma_1,\sigma_2,\sigma_3)$, and we abbreviate
$\pa_t=\pa/\pa t$ and $\vec\na=(\pa/\pa x_1,\pa/\pa x_2,0)$. 
With this notation, we can write $w=w_0\sigma_0+\vec w\cdot\vec\sigma$.
By applying the Wigner transform to the Von-Neumann equation, associated
to the Hamilonian $H+V$, the following Wigner equations for the quantum
transport in graphene have been derived in \cite{ZaBa11}:
\begin{equation}\label{wigner}
\begin{aligned}
  \pa_t w_0 + \left(\frac{\vec p}{m}\cdot\vec\na\right) w_0 + v_F\vec\na\cdot\vec w
  + \theta_\hbar[V]w_0 &= \frac{g_0-w_0}{\tau_c}, \\
  \pa_t\vec w + \left(\frac{\vec p}{m}\cdot\vec\na\right)\vec w 
  + v_F\left(\vec\na w_0 + \frac{2}{\hbar}\vec w\wedge\vec p\right)
  + \theta_\hbar[V]\vec w &= \frac{\vec g-\vec w}{\tau_c},
\end{aligned}
\end{equation}
where $\hbar$ is the reduced Planck constant. The parameter $m$, which has the
dimension of a mass, appears in the Hamiltonian $H$; see \eqref{H}.
The expressions $((\vec p/m)\cdot\vec\na)w_j$ are originating from the quadratic term
in the Hamiltonian $H$. Compared to Formula (12) in \cite{ZaBa11},
we have allowed for BGK-type collision operators on the right-hand sides 
of \eqref{wigner} with the relaxation time $\tau_c$ and
the thermal equilibrium distribution $g = g_0\sigma_0 + \vec g\cdot\vec\sigma$ 
which is defined in Section \ref{sec.g}.
The pseudo-differential operator $\theta_\hbar[V]w$ is given by
$$
  (\theta_\hbar[V]w)(x,p) = \frac{i}{\hbar}\,\frac{1}{(2\pi)^2}
  \int_{\R^2}\int_{\R^2}\delta V(x,\xi)w(x,p')e^{-i(p-p')\cdot\xi}d\xi dp',
$$
with its symbol
$$
  \delta V(x,\xi) = V\left(x+\frac{\hbar}{2}\xi\right)
  - V\left(x-\frac{\hbar}{2}\xi\right).
$$

In order to derive macroscopic diffusive models, we perform a diffusion scaling.
We introduce a typical spatial scale $\widehat x$, time scale $\widehat t$,
momentum scale $\widehat p$, and potential scale $\widehat V$:
$$
  x\to \widehat x, \quad t\to \widehat t t, \quad p\to \widehat p p, \quad
  V\to \widehat V V,
$$
where the scales are related to
$$
  \frac{2v_F\widehat p}{\hbar} = \frac{\widehat V}{\widehat x\widehat p}, \quad
  \frac{2\widehat p v_F \tau_c}{\hbar} = \frac{\hbar}{2\widehat p v_F\widehat t}, \quad
  \widehat p = \sqrt{mk_BT}.
$$
Here, $T$ is the (constant) system temperature and $k_B$ the Boltzmann constant.
The third relation means that the typical value of the momentum equals to
the thermal momentum. Let $L$ denote the average distance which a particle travels 
with the Fermi velocity $v_F$ between two consecutive
collisions, i.e.\ $L=\tau_c v_F$. Then the first relation can be written as
$$
  \frac{\widehat x}{L} = \frac12\,
  \frac{\widehat V(\hbar/\tau_c)}{(\widehat p^2/m)(mv_F^2)}.
$$
Thus, the ratio of the typical length scale and the ``Fermi mean free path''
is assumed to be of the same order as the quotient of the electric/wave energies and
the kinetic/Fermi energies. The second relation
$$
  \frac{\widehat t}{\tau_c} 
  = \frac14\,\frac{(\hbar/\tau_c)^2}{(\widehat p^2/m)(mv_F^2)}
$$
means that the ratio of the typical time scale and the relaxation time
is of the same order as the quotient of the square of the wave energy and
the kinetic/Fermi energies. 

We introduce the semiclassical parameter $\eps$, the diffusion parameter $\tau$,
and the scaled Fermi speed $c$, given by
$$
  \eps = \frac{\hbar}{\widehat x\widehat p}, \quad
  \tau = \frac{2\widehat p v_F\tau_c}{\hbar}, \quad
  c = \sqrt{\frac{mv_F^2}{k_BT}}.
$$
We suppose the semiclassical hypothesis $\eps\ll 1$ and the so-called
low scaled Fermi speed hypothesis,
\begin{equation}\label{lsfs}
  \gamma := \frac{c}{\eps} = O(1) \quad\mbox{as }\eps\to 0.
\end{equation}

With the above scaling, equations \eqref{wigner} become
\begin{equation}\label{wigners}
\begin{aligned}
  \tau\pa_t w_0 + \frac{1}{2\gamma}
  (\vec p\cdot\vec\na) w_0 + \frac{\eps}{2}\vec\na\cdot\vec w
  + \theta_\eps[V]w_0 &= \frac{g_0-w_0}{\tau}, \\
  \tau\pa_t\vec w + \frac{1}{2\gamma}(\vec p\cdot\vec\na)\vec w 
  + \frac{\eps}{2}\vec\na w_0 + \vec w\wedge\vec p
  + \theta_\eps[V]\vec w &= \frac{\vec g-\vec w}{\tau}.
\end{aligned}
\end{equation}
The ``drift'' terms $(\eps/2)\vec\na\cdot\vec w$ and 
$(\eps/2)\vec\na w_0$ are of order $O(\eps)$, whereas the ``precession'' term
$\vec w\wedge\vec p$ is of order one. This means that we have chosen a time scale
which is of the same order as the magnitude of the precession period of the spin
around the current, which is smaller than the typical time scale of the
drift process.

%%%%%%%%%%%%%%%%%%%%%%%%%%%%

\subsection{Thermal equilibrium distribution}\label{sec.g}

We define now the thermal equilibrium distribution
$g=g_0\sigma+\vec g\cdot\vec\sigma$ 
using the minimum entropy principle. We introduce the (unscaled) quantum entropy by
$$
  A[w] = \int_{\R^2}\int_{\R^2}\mbox{tr}\left(w\left(\mbox{Log}(w) - 1 
  + \frac{h(p)}{k_B T}\right)\right)dxdp,
$$
where $\mbox{Log}(w) = \mbox{Op}_\eps^{-1}\log \mbox{Op}_\eps(w)$ is the so-called
quantum logarithm introduced by Degond and Ringhofer \cite{DeRi03},
$\mbox{Op}_\eps$ is the Weyl quantization, 
defined for any symbol $\gamma(x,p)$ and any test function $\psi$
by \cite[Chapter 2]{Fol89}
$$ 
  (\mbox{Op}_\eps(\gamma)\psi)(x) = \frac{1}{(2\pi\hbar)^2}
  \int_{\R^2}\int_{\R^2}\gamma\left(\frac{x+y}{2},p\right)\psi(y) 
  e^{i(x-y)\cdot p/\hbar}dy dp,
$$
and
$$
  h(p) = \frac{|p|^2}{2m}\sigma_0 + v_F(p_1\sigma_1 + p_2\sigma_2)
$$
is the symbol of the Hamiltonian $H$, i.e.\ $H=\mbox{Op}_\hbar(h)$.

According to the theory of Degond and Ringhofer \cite{DeRi03}, we define the
Wigner distribution at local thermal equilibrium related to the given functions
$n_0$ and $\vec n$ as the formal solution $g=g[n_0,\vec n]$ (if it exists) to the
problem
$$
  A\big[g[n_0,\vec n]\big] = \min_w\left\{A[w]: \int_{\R^2}w_0 dx = n_0, \quad
  \int_{\R^2}\vec w dx = \vec n\right\},
$$
where the minimum is taken over all Wigner functions with complex Hermitian values
and $w$ is decomposed according to $w=w_0\sigma_0+\vec w\cdot\vec\sigma$.
This problem can be solved formally by means of Lagrange multipliers; see
\cite[Section 3.2]{Zam12}. For scalar-valued Wigner functions, such problems
are studied analytically in \cite{MePi11}. Formally, the (scaled) solution is
given by
\begin{equation}\label{gn}
  g[n_0,\vec n] = \mbox{Exp}(-h_{A,B}), \quad
  h_{A,B} = \left(\frac{|p|^2}{2}+A\right)\sigma_0 + (c\vec p+\vec B)\cdot\vec\sigma,
\end{equation}
where $A=A(x,t)$ and $\vec B=\vec B(x,t)=(B_1,B_2,B_3)(x,t)$ are the Lagrange
multipliers determined by 
\begin{equation}\label{constr}
  \int_{\R^2}g[n_0,\vec n](x,p,t)dp = n_0(x,t), \quad
  \int_{\R^2}\vec g[n_0,\vec n](x,p,t)dp = \vec n(x,t),
\end{equation}
and $\mbox{Exp}(w) = \mbox{Op}_\eps^{-1}\exp \mbox{Op}_\eps(w)$ is the
quantum exponential \cite{DeRi03}.

We wish to find an approximate but explicit expression for $g$. To this end,
we expand the quantum exponential in terms of powers of $\eps$, using the semiclassical
and the low scaled Fermi speed hypotheses. The expansion follows the lines
of Section 3.4 in \cite{Zam12}. We obtain from \eqref{gn}:
$$
  g[n_0,\vec n] = \mbox{Exp}(a+\eps b), \quad
  a = -\left(\frac{|p|^2}{2}+A\right)\sigma_0 - \vec B\cdot\vec\sigma, \quad
  b = -\gamma\vec p\cdot\vec\sigma.
$$
Note that $a$ and $b$ are of order one, in view of \eqref{lsfs}. Employing formulas
(29), (37), and (38) of \cite{Zam12}, we deduce that $g=g^{(0)} + \eps g^{(1)}
+ O(\eps^2)$, where
\begin{equation}\label{g.exp}
\begin{aligned}
  g^{(0)} &= e^{-(A+|p|^2/2)}\left(\cosh|\vec{B}|\sigma_0 
  - \frac{\sinh|\vec{B}|}{|\vec{B}|}\vec{B}\cdot\vec{\sigma}\right), \\
  g^{(1)} &= \gamma e^{-(A+|p|^2/2)}\Bigg\{
  \frac{\sinh |\vec{B}|}{|\vec{B}|}(\vec{B}\cdot\vec{p})\sigma_0  \\ 
  &\phantom{xx}{}
  - \Bigg[\left(\left(\cosh|\vec{B}| - \frac{\sinh |\vec{B}|}{|\vec{B}|}\right)
  \frac{\vec{B}\otimes\vec{B}}{|\vec{B}|^2}
  + \frac{\sinh |\vec{B}|}{|\vec{B}|}I\right)\vec{p} \\ 
  &\phantom{xx}{}+ \left(\cosh|\vec{B}| - \frac{\sinh |\vec{B}|}{|\vec{B}|}\right)
  \frac{((\vec{p}\cdot\vec{\nabla}_x)\vec{B})\wedge\vec{B}}{2\gamma|\vec{B}|^2}
  \Bigg]\cdot\vec{\sigma}\Bigg\},
\end{aligned}
\end{equation}
where $I$ denotes the unit matrix in $\R^{3\times 3}$.
Finally, it remains to express the Lagrange multipliers $A$ and $\vec B$ in
terms of $n_0$ and $\vec n$ by means of the contraints \eqref{constr}.
We find after tedious but straightforward computations that
\begin{equation}\label{g.ab}
  e^{-A} = \frac{1}{2\pi}\sqrt{n_0^2 - |\vec{n}|^2} + O(\eps^2), \quad
  \vec{B} = -\frac{\vec{n}}{|\vec{n}|}\log
  \sqrt{\frac{n_0 + |\vec{n}|}{n_0 - |\vec{n}|}} + O(\eps^2),
\end{equation}
where $I$ denotes the unit matrix.
Equations \eqref{g.exp} and \eqref{g.ab} provide an explicit approximation of the
thermal equilibrium distribution. The functions $g_0$ and $\vec g$ are defined
by $g^{(0)}+\eps g^{(1)}=g_0\sigma_0+\vec g\cdot\vec\sigma$.

%%%%%%%%%%%%%%%%%%%%%%%%%%%%

\subsection{Derivation of the first model}\label{sec.model1}

We derive our first spinorial drift-diffusion model. We assume that both
the semiclassical parameter $\eps$ and the diffusion parameter $\tau$ 
in \eqref{wigners} are small and of the same order. We will perform the limit
$\tau\to 0$ and $\eps\to 0$, setting 
$$
  \lambda := \frac{c}{\tau} = \frac{\eps\gamma}{\tau} = O(1) \quad\mbox{as }
  \tau\to 0.
$$
{}From \eqref{wigners} follows that the lowest-order approximations of $w_0$ and 
$\vec w$ are $g_0$ and $\vec g$, respectively. In order to compute the
first-order approximation, we employ a Chapman-Enskog expansion of the Wigner
function $w=w_0+\vec w\cdot\vec\sigma$ around the equilibrium distribution $g$.
Inserting the expansions 
$w_0=g_0\sigma_0+\tau f_0$, $\vec w=\vec g+\tau\vec f$ into
\eqref{wigners} and performing the formal limit $\tau\to 0$, we infer that
$$
  f_0 = -\frac{1}{2\gamma}(\vec p\cdot\vec\na)g_0 + \vec\na V\cdot\vec\na_p g_0, \quad
  \vec f = -\frac{1}{2\gamma}(\vec p\cdot\vec\na)\vec g 
  + \vec\na V\cdot\vec\na_p\vec g - \vec g\wedge\vec p.
$$
Here, we have used the expansion $\theta_\eps[V]=-\vec\na V\cdot\vec\na_p + O(\eps)$.

The moment equations of \eqref{wigners} read as
\begin{equation}\label{mom.eqs}
\begin{aligned}
  \tau\pa_t n_0 + \frac{1}{2\gamma}\vec\na\cdot\int_{\R^2}\vec pw_0 dp
  + \frac{\eps}{2}\vec\na\cdot\vec n &= 0, \\
  \tau\pa_t\vec n + \frac{1}{2\gamma}\vec\na\cdot\int_{\R^2}\vec w\otimes\vec p dp
  + \frac {\eps}{2}\vec\na n_0 + \int_{\R^2}\vec w\wedge\vec p dp &= 0,
\end{aligned}
\end{equation}
since $\int_{\R^2}\theta_\eps[V]w_j dp=0$ for $j=0,1,2,3$. 
We need to compute the first-order moments
$$
  \int_{\R^2}p_j w_k dp, \quad j=1,2, \ k=0,1,2,3,
$$
in order to close the moment equations \eqref{mom.eqs}.
For this, we insert in these integrals
the expansions for $w_0$ and $\vec w$ as well as the expansions
\eqref{g.exp}-\eqref{g.ab}. After long but straightforward computations
and rescaling $x\to x/(2\gamma)$ and $V\to V/(2\gamma)$ in order to get rid 
of the factor $1/(2\gamma)$, we arrive at the expressions, up to terms of order
$O(\tau^2)$,
\begin{align*}
  \int_{\R^2}p_k w_0 dp &= -\tau(\lambda n_k + \partial_k n_0 + n_0\partial_k V), \\
  \int_{\R^2}p_k w_s dp &= -\tau(\lambda Q_{ks} + \partial_k n_s 
  + n_s \partial_k V + \eta_{s\ell k}n_\ell),
\end{align*}
where we have set $\pa_1=\pa/\pa x_1$, $\pa_2=\pa/\pa x_2$, $\pa_3=0$,
\begin{align*}
  Q_{ks} &= n_0\delta_{ks} 
  - \frac{1}{n_0}\Phi\left(\frac{|\vec{n}|}{n_0}\right)
  \big(|\vec{n}|^2\delta_{ks} - n_k n_s + \eta_{sj\ell}n_j
  \partial_k n_\ell\big), \\
  \Phi(y) &= y^{-2}\left(1 - \frac{2y}{\log(1+y) - \log(1-y)}\right),\quad 0 < y < 1,
\end{align*}
using Einstein's summation convention,
and $(\eta_{jk\ell})$ is the only antisymmetric 3-tensor which is invariant
under cyclic index permutations such that $\eta_{123}=1$. In other words, 
$\eta_{jk\ell}a_j b_k = (\vec a\wedge\vec b)_{\ell}$ for $\vec a$, $\vec b\in\R^3$.
Inserting these expressions into \eqref{mom.eqs}, we find the model QSDE1:
\begin{align}
  & \pa_t n_0 - \diver\,J_0 = 0, \quad J_0 = \na n_0 + n_0\na V, \label{1.eq.n0} \\
  & \pa_t n_j - \diver\,J_j = F_j, \quad
  F_j = \eta_{jk\ell}n_k\pa_\ell V - 2n_j + b_k[\vec n/n_0]\pa_k n_j
  - b_j[\vec n/n_0]\vec\na\cdot\vec n, \label{1.eq.nj} \\
  & J_{js} = \big(\delta_{j\ell} + b_k[\vec n/n_0]\eta_{jk\ell}\big)\pa_s n_\ell
  + n_j\pa_s V - 2\eta_{js\ell}n_\ell + b_k[\vec n/n_0](\delta_{jk}n_s-\delta_{js}n_k),
  \label{1.eq.j}
\end{align}
where $j$, $s=1,2,3$. The functions
$$
  b_k[\vec v] = \lambda \frac{v_k}{|\vec v|^2}
  \left(1 - \frac{2|\vec v|}{\log(1+|\vec v|)-\log(1-|\vec v|)}\right), \quad
  k=1,2,3,\ \vec v\in\R^3,\ 0<|\vec v|<1,
$$
satisfy $0<b_k[\vec v]<\lambda$ for all $0<|\vec v|<1$, 
$\lim_{|\vec v|\to 0}b_k[\vec v]=0$,
and the function $|\vec v|\mapsto \Phi(|\vec v|)=b_k[\vec v]/(\lambda v_k)$ 
is increasing. 
This allows us to set $b_k[0]=0$ such that $b_k$ is defined for all $0\le|\vec v|<1$. 

The above equations are complemented by the Poisson equation
\begin{equation}
  -\lambda_D^2\Delta V = n_0 - C(x) \label{1.eq.V}
\end{equation}
for the electric potential, where $\lambda_D>0$ is the scaled Debye length.

%%%%%%%%%%%%%%%%%%%%%%%%%

\subsection{Derivation of the second model}\label{sec.model2}

In the model \eqref{1.eq.n0}-\eqref{1.eq.j}, the particle density $n_0$
evolves independently from the spin vector $\vec n$. We will modify this model
in order to derive a fully coupled system by adding a ``pseudo-magnetic'' field
which is supposed to be able to interact with the charge carrier pseudo-spin.

Possanner and Negulescu \cite{PoNe11} consider a semiconductor subject to a 
magnetic field which interacts with the electron spin and they build a purely
semiclassical diffusive model for the particle density $n_0$ and the spin
vector $\vec n$ by a Chapman-Enskog expansion around the equilibrium
distribution. Instead of the relaxation-time model used in Section \ref{sec.model1}, 
we employ here the mass and spin conserving collision operator (49) of \cite{PoNe11},
$$
  Q(w) = P^{1/2}(g-w)P^{1/2}, 
$$
where $g$ is the equilibrium distribution, defined in Section \ref{sec.g}, and
$P=\sigma_0+\zeta\vec\omega\cdot\vec\sigma$ is the polarization matrix
with the pseudo-spin polarization $\zeta(x,t)$ of the scattering rate and
$\vec\omega(x,t)$ is the direction of the pseudo-magnetization (see
\cite[Section 4.1]{PoNe11}). The quantity $\zeta(x,t)\in(0,1)$ satisfies
$$
  s_\uparrow = \frac{1 + |\zeta(x,t)|}{1 - |\zeta(x,t)|}s_\downarrow,
$$
where $s_{\uparrow\downarrow}$ are the scattering rates of electrons in the
upper and lower band, respectively, and it holds $|\vec\omega(x,t)|=1$.
We introduce the collision operators $Q_0$ and $\vec Q$ by
$$
  Q(w) = Q_0(w)\sigma_0 + \vec Q(w)\cdot\vec\sigma.
$$

We start from the scaled Wigner equations
\begin{equation}\label{wigner2}
\begin{aligned}
  \tau\pa_t w_0 + \frac{1}{2\gamma}(\vec p\cdot\vec\na)w_0 
  + \frac{\eps}{2}\vec\na\cdot\vec w + \theta_\eps[V]w_0 
  &= \frac{1}{\tau}Q_0(w), \\
  \tau\pa_t \vec w + \frac{1}{2\gamma}(\vec p\cdot\vec\na)\vec w
  + \frac{\eps}{2}\vec\na w_0 + \vec w\wedge\vec p + \theta_\eps[V]\vec w
  + \tau\vec\omega\wedge\vec w &= \frac{1}{\tau}\vec Q(w).
\end{aligned}
\end{equation}
Compared with the Wigner system \eqref{wigners} in Section \ref{sec.model},
the second equation in \eqref{wigner2} contains the (heuristic)
term $\tau\vec\omega\wedge\vec w$,
which describes the ``precession'' of $\vec w$ around the local pseudo-magnetization.
We assume again that $\lambda:=\eps\gamma/\tau$ is of order one (as $\tau\to 0$)
and we perform a Chapman-Enskog expansion of the Wigner distribution
$w=w_0\sigma_0+\vec w\cdot\vec\sigma$. The result reads as follows:
\begin{equation}\label{CE2}
  w = g - \tau P^{-1/2}T[g]P^{-1/2},
\end{equation}
where
$$
  T[g] = \left(\frac{1}{2\gamma}\vec p\cdot\vec\na - \vec\na V\cdot
  \vec\na_p\right)g_0\sigma_0
  + \left(\left(\frac{1}{2\gamma}\vec p\cdot\vec\na - \vec\na V\cdot
  \vec\na_p\right)\vec g + \vec g\wedge\vec p\right)\cdot\vec\sigma.
$$
To compute $P^{-1/2}T[g]P^{-1/2}$, we employ the following lemma whose
proof is an elementary computation.

\begin{lemma}\label{lem.P}
For all Hermitian matrices $a=a_0\sigma_0 + \vec a\cdot\vec\sigma$, it holds
\begin{align*}
  P^{-1/2}aP^{-1/2} &= \frac{1}{1-\zeta^2}(a_0-\zeta\vec\omega\cdot\vec a)
  \sigma_0 \\
  &\phantom{xx}{}
  + \frac{1}{1-\zeta^2}\big[\zeta\vec\omega a_0 + (\vec\omega\otimes\vec\omega
  + \sqrt{1-\zeta^2}(I-\vec\omega\otimes\vec\omega))\vec a\big]\cdot\vec\sigma.
\end{align*}
\end{lemma}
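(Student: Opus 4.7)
The key observation is that $\vec\omega\cdot\vec\sigma$ squares to $|\vec\omega|^2\sigma_0 = \sigma_0$, so its spectral projections are $P_\pm = \tfrac{1}{2}(\sigma_0 \pm \vec\omega\cdot\vec\sigma)$ with eigenvalues $\pm 1$. The plan is to first diagonalize $P$ through these projections, so that taking $P^{-1/2}$ reduces to applying the functional calculus on each eigenspace, and then to expand the resulting sandwich product by brute force using the standard Pauli identities.

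Concretely, I would write
$$
  P = (1+\zeta)P_+ + (1-\zeta)P_-,
  \qquad
  P^{-1/2} = (1+\zeta)^{-1/2}P_+ + (1-\zeta)^{-1/2}P_- = \alpha\sigma_0 + \beta\,\vec\omega\cdot\vec\sigma,
$$
where $\alpha = \tfrac{1}{2}\bigl((1+\zeta)^{-1/2}+(1-\zeta)^{-1/2}\bigr)$ and $\beta = \tfrac{1}{2}\bigl((1+\zeta)^{-1/2}-(1-\zeta)^{-1/2}\bigr)$. A short calculation gives
$$
  \alpha^2+\beta^2 = \frac{1}{1-\zeta^2},\qquad
  \alpha^2-\beta^2 = \frac{1}{\sqrt{1-\zeta^2}},\qquad
  2\alpha\beta = \frac{-\zeta}{1-\zeta^2},
$$
which will exactly produce the three scalar prefactors appearing in the target formula. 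Next I would compute the product $P^{-1/2}aP^{-1/2}$ by two successive applications of the Pauli multiplication rule $(\vec b\cdot\vec\sigma)(\vec c\cdot\vec\sigma) = (\vec b\cdot\vec c)\sigma_0 + i(\vec b\wedge\vec c)\cdot\vec\sigma$, keeping the result in the canonical form \emph{scalar part}$\,\sigma_0 + $ \emph{vector part}$\,\cdot\vec\sigma$.

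Reading off the coefficients, the scalar part is $(\alpha^2+\beta^2)a_0 + 2\alpha\beta(\vec\omega\cdot\vec a)$, which immediately yields the first line of the lemma. For the vector part, after cancellation of the two conjugate imaginary contributions $i\alpha\beta\,\vec\omega\wedge\vec a$ and $i\alpha\beta\,\vec a\wedge\vec\omega$, one is left with terms proportional to $\vec a$, $\vec\omega$, and $(\vec a\cdot\vec\omega)\vec\omega$, the last of which requires the BAC-CAB expansion of $(\vec\omega\wedge\vec a)\wedge\vec\omega = \vec a - (\vec a\cdot\vec\omega)\vec\omega$ (this uses $|\vec\omega|=1$). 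Re-grouping the $(\vec a\cdot\vec\omega)\vec\omega$ and $\vec a$ contributions as $(\vec\omega\otimes\vec\omega)\vec a$ and $(I-\vec\omega\otimes\vec\omega)\vec a$ with the coefficients $(1-\zeta^2)^{-1}$ and $(1-\zeta^2)^{-1/2}$ respectively gives precisely the bracketed expression in the statement.

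No real obstacle is expected, since the lemma is purely algebraic; the only bookkeeping care needed is to track the cross terms that contain the imaginary unit so as to verify that they cancel, and to check the signs coming from $2\alpha\beta$, which flips sign relative to $\alpha^2+\beta^2$ and is the mechanism producing the $\zeta$-coupling between the scalar and vector sectors.
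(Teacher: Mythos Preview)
Your approach is correct and is exactly the ``elementary computation'' the paper invokes without giving details; the spectral decomposition of $P$ via the projections $P_\pm=\tfrac12(\sigma_0\pm\vec\omega\cdot\vec\sigma)$ followed by the Pauli product rule is the natural way to do it.  One small caveat: when you carry the bookkeeping through, the $a_0\vec\omega$ contribution to the vector part carries the coefficient $2\alpha\beta=-\zeta/(1-\zeta^2)$, so the printed statement has a sign typo (it should read $-\zeta\vec\omega a_0$ rather than $+\zeta\vec\omega a_0$); this is confirmed instantly by the special case $a=\sigma_0$, where $P^{-1/2}\sigma_0 P^{-1/2}=P^{-1}=(1-\zeta^2)^{-1}(\sigma_0-\zeta\,\vec\omega\cdot\vec\sigma)$.
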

Since the collision operator $Q(w)$ is conserving mass and spin, the moment equations
of \eqref{wigner2} become
\begin{equation*}
\begin{aligned}
  \tau\pa_t n_0 + \frac{1}{2\gamma}\vec\na\cdot\int_{\R^2}\vec pw_0 dp
  + \frac{\eps}{2}\vec\na\cdot\vec n &= 0, \\
  \tau\pa_t\vec n + \frac{1}{2\gamma}\vec\na\cdot\int_{\R^2}\vec w\otimes\vec p dp
  + \frac{\eps}{2}\vec\na n_0 + \int_{\R^2}\vec w\wedge\vec p dp
  + \tau\vec\omega\wedge\vec n &= 0.
\end{aligned}
\end{equation*}
The first-order moments $\int_{\R^2}p_k w_0 dp$ and $\int_{\R^2}p_k w_s dp$
can be calculated up to order $O(\eps^2)=O(\tau^2)$
by using \eqref{CE2} and Lemma \ref{lem.P}, leading to the 
spin diffusion model QSDE2:
\begin{equation}\label{2.eq.n}
  \pa_t n_0 = \diver\,J_0, \quad
  \pa_t n_j = \diver\,J_j + G_j, \quad j=1,2,3,
\end{equation}
where 
\begin{equation}\label{2.eq.j}
\begin{aligned}
  J_{0s} &= (1-\zeta^2)^{-1}\big((\pa_s n_0+n_0\pa_s V) - \zeta\omega_k
  (\pa_s n_k + n_k\pa_s V + \eta_{k\ell s}n_\ell)\big), \\
  J_{js} &= (1-\zeta^2)^{-1}\big[-\zeta\omega_j (\pa_s n_0 + n_0\partial_s V) \\
  &\phantom{xx}{}+ (\omega_j\omega_k + \sqrt{1-\zeta^2}
  (\delta_{jk} - \omega_j\omega_k))
  (\pa_s n_k + n_k\pa_s V + \eta_{k\ell s}n_\ell)\big], \\
  G_j &= \eta_{jks}(J_{ks} + n_k\omega_s)
  + \pa_s\big(b_k[\vec n/n_0](\eta_{jk\ell}\pa_s n_\ell + \delta_{jk}n_s
  - \delta_{js}n_k)\big) \\
  &\phantom{xx}{}+ b_s[\vec n/n_0]\pa_s n_j - b_j[\vec n/n_0]\vec\na\cdot\vec n.
\end{aligned}
\end{equation}
In contrast to the model QSDE1 \eqref{1.eq.n0}-\eqref{1.eq.j}, this system
is fully coupled. It is possible to show that the system is uniformly parabolic
if $\|\zeta\|_{L^\infty(0,T;L^\infty(\Omega))}<1$ but the presence of the 
cross-diffusion terms makes it hard to prove any $L^\infty$ bounds, in 
particular the bound $|\vec n|/n_0<1$. 

%%%%%%%%%%%%%%%%%%%%%%%%%%%%%%%%%%%%%%%%%%%%%%%%%%%%%%%%%%%%%%%%%%%%%%%%%%%%%%

\section{Analysis for the first model}\label{sec.anal}

In this section, we consider the model QSDE1 \eqref{1.eq.n0}-\eqref{1.eq.V}
in a bounded domain $\Omega\subset\R^2$ with $\pa\Omega\in C^{1,1}$.
For convenience, we recall the equations:
\begin{align}
  & \pa_t n_0 - \diver\,J_0 = 0, \quad J_0 = \na n_0 + n_0\na V, 
  \label{3.eq.n0} \\
  & \pa_t n_j - \diver\,J_j = F_j, \quad
  F_j = \eta_{jk\ell}n_k\pa_\ell V - 2n_j + b_k[\vec n/n_0]\pa_k n_j
  - b_j[\vec n/n_0]\vec\na\cdot\vec n, \label{3.eq.nj} \\
  & J_{js} = \big(\delta_{j\ell} + b_k[\vec n/n_0]\eta_{jk\ell}\big)\pa_s n_\ell
  + n_j\pa_s V - 2\eta_{js\ell}n_\ell + b_k[\vec n/n_0](\delta_{jk}n_s-\delta_{js}n_k),
  \label{3.eq.j} \\
  & -\lambda_D^2\Delta V = n_0-C(x)\quad\mbox{in }\Omega,\ t>0, \ j=1,2,3, \ s=1,2,
  \label{3.eq.V}
\end{align}
and the functions
$$
  b_k[\vec v] = \lambda \frac{v_k}{|\vec v|^2}
  \left(1 - \frac{2|\vec v|}{\log(1+|\vec v|)-\log(1-|\vec v|)}\right), \quad
  k=1,2,3,\ \vec v\in\R^3,\ 0<|\vec v|<1.
$$
We impose the following boundary and initial conditions:
\begin{align}
  n_0 = n_D, \quad \vec n = 0, \quad V = V_D &\quad\mbox{on }\pa\Omega,\ t>0, 
  \label{3.bc} \\
  n_0(0) = n^0_{I}, \quad \vec n(0) = \vec n_I  &\quad\mbox{in }\Omega. \label{3.ic}
\end{align}
Finally, we abbreviate $\Omega_T=\Omega\times(0,T)$.

\subsection{Existence of solutions}\label{sec.ex}

We impose the following conditions on the data:
\begin{align}
  & n_D\in H^1(0,T;H^2(\Omega))\cap H^2(0,T;L^2(\Omega))\cap
  L^\infty(0,T;L^\infty(\Omega)), \label{1.nD} \\
  & n^0_{I}\in H^1(\Omega), \quad \inf_{\Omega}n_I^0 > 0, \quad
  n_I^0 = n_D(0) \quad\mbox{on }\pa\Omega, \quad
  \inf_{\pa\Omega\times(0,T)}n_D > 0, \label{1.n0} \\
  & V_D\in L^\infty(0,T;W^{2,p}(\Omega))\cap H^1(0,T;H^1(\Omega)), 
  \quad C\in L^\infty(\Omega), \ C\ge 0\mbox{ in }\Omega, \label{1.VD}
\end{align}
for some $p>2$.
Under these assumptions, we are able to prove the existence of strong 
solutions $(n_0,V)$ to the drift-diffusion model \eqref{3.eq.n0} and \eqref{3.eq.V}.

\begin{theorem}\label{thm.n0}
Let $T>0$ and assume \eqref{1.nD}-\eqref{1.VD}. Then there exists a unique
solution $(n_0,V)$ to \eqref{3.eq.n0} and \eqref{3.eq.V} subject to the
initial and boundary conditions in \eqref{3.bc}-\eqref{3.ic} satisfying
\begin{align*}
  & n_0\in L^\infty(0,T;H^2(\Omega))\cap H^1(0,T;H^1(\Omega))\cap
  H^2(0,T;(H^1(\Omega))'), \\
  & 0<me^{-\mu t}\le n_0\le M\quad\mbox{in }\Omega,\ t>0, \quad
  V\in L^\infty(0,T;W^{1,\infty}(\Omega)),
\end{align*}
where $\mu=\lambda_D^{-2}$ and
$$
  M = \max\left\{\sup_{\pa\Omega\times(0,T)}n_D,\ \sup_\Omega n_I^0,\ \sup_\Omega C(x)
  \right\},
  \quad m = \min\left\{\inf_{\pa\Omega\times(0,T)}n_D,\ \inf_\Omega n_I^0\right\} > 0.
$$
\end{theorem}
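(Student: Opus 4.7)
Since the $(n_0,V)$ subsystem \eqref{3.eq.n0}, \eqref{3.eq.V} is completely decoupled from the spin vector, this is a classical quasilinear drift-diffusion--Poisson problem and I would attack it with the standard triad: existence by a fixed-point argument, pointwise bounds via Stampacchia truncation exploiting the Poisson identity, and higher regularity by bootstrap from elliptic regularity of $V$.

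After homogenizing the boundary data, I would define a fixed-point map $\Phi\colon\tilde n\mapsto n$ by first solving \eqref{3.eq.V} with source $\tilde n-C$ to obtain a potential $\tilde V$, then the linear parabolic problem $\pa_t n-\Delta n-\diver(n\na\tilde V)=0$ with the given data. Linear theory provides existence and $H^1$ energy estimates for $n$, while elliptic $W^{2,p}$ control of $\tilde V$ in terms of $\tilde n$ together with Aubin--Lions yields compactness on a ball of $L^2(0,T;H^1(\Omega))$; Schauder then produces a fixed point.

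The heart of the argument is the $L^\infty$ bound. Testing \eqref{3.eq.n0} with $(n_0-M)_+$ gives
$$
  \tfrac12\tfrac{d}{dt}\|(n_0-M)_+\|^2_{L^2} + \|\na(n_0-M)_+\|^2_{L^2} + \int_\Omega n_0\,\na V\cdot\na(n_0-M)_+\,dx = 0.
$$
Splitting $n_0=(n_0-M)_+ + \min\{n_0,M\}$ in the last integral, integrating by parts once more and substituting $\Delta V=-\lambda_D^{-2}(n_0-C)$ rewrites it as
$$
  \frac{1}{2\lambda_D^2}\int(n_0-C)(n_0-M)_+^2\,dx + \frac{M}{\lambda_D^2}\int(n_0-C)(n_0-M)_+\,dx,
$$
and both integrands are nonnegative on $\{n_0>M\}$ because $M\ge\|C\|_{L^\infty}$; since $(n_0-M)_+$ vanishes on the parabolic boundary, this forces $(n_0-M)_+\equiv 0$. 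For the lower bound I would set $\psi=n_0 e^{\mu t}/m$; a direct computation gives $\pa_t\psi=\Delta\psi+\na\psi\cdot\na V+(\mu+\Delta V)\psi$, and the upper bound already established together with $C\ge 0$ ensures $\mu+\Delta V\ge\mu-\lambda_D^{-2}M$, so with $\mu$ chosen appropriately and $\psi\ge 1$ on the parabolic boundary by \eqref{1.n0}, the weak maximum principle yields $\psi\ge 1$, i.e.\ $n_0\ge me^{-\mu t}$.

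Once $n_0\in L^\infty(\Omega_T)$ is secured, $C^{1,1}$ regularity of $\pa\Omega$ and \eqref{1.VD} give $V\in L^\infty(0,T;W^{2,p}(\Omega))$ with $p>2$, and Morrey embedding provides $\na V\in L^\infty(\Omega_T)$. The equation \eqref{3.eq.n0} then reduces to a linear parabolic problem with bounded drift, and standard maximal-regularity estimates, combined with the regularity of $n_D$ and $n_I^0$ from \eqref{1.nD}--\eqref{1.n0}, deliver $n_0\in L^\infty(0,T;H^2)\cap H^1(0,T;H^1)$; differentiating the equation once in time and testing with $H^1$ functions yields the claimed $H^2(0,T;(H^1)')$ regularity. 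Uniqueness follows from a standard Gronwall energy estimate on the difference of two solutions, where the Lipschitz dependence of $V$ on $n_0$ through Poisson and the $L^\infty$ bound on $\na V$ close the estimate. The main obstacle is the lower bound, which requires the upper bound $n_0\le M$ to be in place first in order to calibrate $\mu$, and a careful check that the barrier $me^{-\mu t}$ lies below the initial and boundary values, guaranteed by the strict positivity in \eqref{1.n0}.
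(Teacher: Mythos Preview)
Your overall strategy matches the paper's: it too cites standard drift--diffusion theory for existence/uniqueness of a weak solution, proves the $L^\infty$ bounds by Stampacchia truncation combined with the Poisson identity $\Delta V=-\lambda_D^{-2}(n_0-C)$, and then bootstraps regularity via elliptic estimates on $V$ followed by parabolic regularity for $n_0-n_D$. Your upper-bound computation and the regularity sketch are essentially identical to what the paper does.

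The one substantive discrepancy is the lower bound. The paper does \emph{not} pass to $\psi=n_0e^{\mu t}/m$ and invoke a barrier; instead it tests \eqref{3.eq.n0} directly with $(n_0-me^{-\mu t})^-$ and splits $n_0=(n_0-me^{-\mu t})+me^{-\mu t}$ in the drift term. After integrating by parts and inserting $\Delta V=-\lambda_D^{-2}(n_0-C)$, the dangerous term that is linear in $(n_0-me^{-\mu t})^-$ carries the factor $(\lambda_D^{-2}-\mu)$ and hence vanishes for the specific choice $\mu=\lambda_D^{-2}$; what remains is quadratic and Gronwall closes. Your route via $\psi$ produces the zeroth-order coefficient $\mu+\Delta V$, and to make it nonnegative you are forced to take $\mu\ge\lambda_D^{-2}\sup n_0=\lambda_D^{-2}M$; the phrase ``with $\mu$ chosen appropriately'' therefore does not recover the value $\mu=\lambda_D^{-2}$ stated in the theorem (unless $M\le 1$). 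Your argument yields the qualitatively correct conclusion $n_0\ge me^{-\lambda_D^{-2}Mt}>0$, which is all that is needed downstream in the proof of Theorem~\ref{thm.vecn}, but it does not establish the sharper decay rate asserted here. To hit $\mu=\lambda_D^{-2}$ you should follow the paper's direct truncation, where the point is precisely the cancellation of the linear remainder at that value of $\mu$ rather than a sign condition on $\mu+\Delta V$.
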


\begin{proof}
The existence and uniqueness of a weak solution $(n_0,V)$ to \eqref{3.eq.n0}, \eqref{3.eq.V}, and \eqref{3.bc}-\eqref{3.ic} satisfying
$$
  n_0\ge 0, \quad n_0\in L^2(0,T;H^1(\Omega))\cap H^1(0,T;(H^1(\Omega))'), \quad
  V\in L^2(0,T;H^1(\Omega))
$$
is shown in \cite{Gaj85}, also see Section 3.9 in \cite{MRS90}. 
It remains to prove the regularity assertions.

First, we show that $n_0$ is bounded from above and below.
Employing $(n_0-M)^+=\max\{0,n_0-M\}\in L^2(0,T;H^1_0(\Omega))$ 
as a test function in the weak formulation
of \eqref{3.eq.n0} and using \eqref{3.eq.V}, we find that
\begin{align*}
  \frac12\,\frac{d}{dt}\int_\Omega & \big((n_0-M)^+\big)^2 dx 
  + \int_\Omega|\na(n_0-M)^+|^2 dx
  = -\int_\Omega n_0\na V\cdot\na(n_0-M)^+ dx \\
  &= -\frac12\int_\Omega\na V\cdot\na\big((n_0-M)^+\big)^2 dx
  - M\int_\Omega\na V\cdot\na(n_0-M)^+ dx \\
  &= -\frac{1}{2\lambda_D^2}\int_\Omega(n_0-C(x))\big((n_0-M)^+\big)^2 dx
  - \frac{M}{\lambda_D^2}\int_\Omega(n_0-C(x))(n_0-M)^+ dx \\
  &\le 0,
\end{align*}
since $n_0-C(x)\ge 0$ on $\{n_0>M\}$, by the definition of $M$. This implies that $(n_0-M)^+=0$ and hence, $n_0\le M$ in $\Omega$, $t>0$.

Next, we employ the test function $(n_0-me^{-\mu t})^-=\min\{0,n_0-me^{-\mu t}\}$ 
in the weak formulation of \eqref{3.eq.n0}:
\begin{align*}
  \frac12\,\frac{d}{dt}\int_\Omega & \big((n_0-me^{-\mu t})^-\big)^2 dx
  + \int_\Omega|\na(n_0-me^{-\mu t})^-|^2 dx \\
  &= -\int_\Omega (n_0-me^{-\mu t})\na V\cdot\na(n_0-me^{-\mu t})^- dx
  - me^{-\mu t}\int_\Omega\na V\cdot\na(n_0-me^{-\mu t})^- dx \\
  &\phantom{xx}{}+ \mu me^{-\mu t}\int_\Omega(n_0-me^{-\mu t})^- dx \\
  &= -\frac{1}{2\lambda_D^2}\int_\Omega(n_0-C(x))\big((n_0-me^{-\mu t})^-\big)^2 dx \\
  &\phantom{xx}{}- \frac{1}{\lambda_D^2}\int_\Omega(n_0-C(x))(n_0-me^{-\mu t})^- dx
  + \mu me^{-\mu t}\int_\Omega(n_0-me^{-\mu t})^- dx \\
  &\le \frac{1}{2\lambda_D^2}\|C\|_{L^\infty(\Omega)}
  \int_\Omega\big((n_0-me^{-\mu t})^-\big)^2 dx
  - \int_\Omega\left(\frac{1}{\lambda_D^2}-\mu\right)
  me^{-\mu t}(n_0-me^{-\mu t})^- dx,
\end{align*}
since we integrate over $\{n_0<me^{-\mu t}\}$.
By the definition of $\mu$, the last integral vanishes. Then, the Gronwall lemma
implies that $(n_0-me^{-\mu t})^-=0$ and hence, $n_0\ge me^{-\mu t}$.

The above bounds show that the right-hand side of the Poisson equation
is an element of $L^\infty(0,T;L^\infty(\Omega))$. 
Then, by elliptic regularity, $V\in L^\infty(0,T;W^{2,p}(\Omega))$, where
$p>2$ is given in \eqref{1.VD}. Since
$W^{2,p}(\Omega)\hookrightarrow W^{1,\infty}(\Omega)$ (we recall
that $\Omega\subset\R^2$), it follows that
$\na V\in L^\infty(0,T;L^\infty(\Omega))$. Consider 
$$
  -\lambda_D^2\Delta\pa_t V = \pa_t n_0 \quad\mbox{in }\Omega, \quad
  \pa_t V = \pa_t V_D\quad\mbox{on }\pa\Omega.
$$
The right-hand side of this equation satisfies $\pa_t n_0
\in L^2(0,T;(H^1(\Omega))')$. Hence, $\pa_t V\in L^2(0,T;H^1(\Omega))$.

Finally, we prove the higher regularity for $n_0$. For this, we consider
the equation satisfied by $\rho=n_0-n_D$:
$$
  \pa_t\rho - \diver(\na\rho + \rho\na V) = f\mbox{ in }\Omega,\ t>0, \quad
  \rho=0\mbox{ on }\pa\Omega, \quad \rho(\cdot,0)=n_I^0-n_D(\cdot,0),
$$
where $f=-\pa_t n_D+\diver(\na n_D+n_D\na V)$. 
We use the following result: If $f$, $\pa_t f\in L^2(\Omega_T)$ and
$\rho(\cdot,0)\in H^2(\Omega)\cap H_0^1(\Omega)$ then $n_0\in C^0([0,T];H^2(\Omega))$,
$\pa_t n_0\in L^2(0,T;H^1(\Omega))$, $\pa_t^2 n_0\in L^2(0,T;(H^1(\Omega))')$
(see, e.g., \cite[Theorem 1.3.1]{Zhe95}).

The boundedness of $\na V$ implies that $f\in L^2(\Omega_T)$.
This gives the regularity $\pa_t n_0\in L^2(\Omega_T)$. As a consequence,
$V\in H^1(0,T;H^2(\Omega))$.
Our assumptions on the data show that $-\pa_t^2 n_D + \Delta\pa_t n_D\in
L^2(\Omega_T)$, and it remains to prove that $\pa_t\diver(n_D\na V)
\in L^2(\Omega_T)$. Now,
$$
  \pa_t\diver(n_D\na V) = \pa_t\na n_D\cdot\na V + \na n_D\cdot\na\pa_t V
  + \pa_t n_D\Delta V + n_D\Delta\pa_t V.
$$
The first term on the right-hand side lies in $L^2(\Omega_T)$
since $\na V\in L^\infty(0,T;L^\infty(\Omega))$. Furthermore, 
$\na n_D\in L^\infty(0,T;H^1(\Omega))$ and $\na\pa_t V\in L^2(0,T;H^1(\Omega))$
from we conclude that the second term is in $L^2(\Omega_T)$. In a similar
way, this property can be verified for the third and fourth terms.
This shows the claim and the regularity statements for $n_0$.
\end{proof}

Next, given $(n_0,V)$ as the solution to \eqref{3.eq.n0} and \eqref{3.eq.V},
we prove the existence of a solution $\vec n$ to \eqref{3.eq.nj}
with the corresponding boundary and initial conditions in \eqref{3.bc}-\eqref{3.ic},
satisfying the bound $|\vec n|/n_0<1$.

\begin{theorem}\label{thm.vecn}
Let $(n_0,V)$ be the solution to \eqref{3.eq.n0}, \eqref{3.eq.V}, and
\eqref{3.bc}-\eqref{3.ic}, according to Theorem \ref{thm.n0}, 
and let $\vec n_I\in H_0^1(\Omega)^3$ satisfy
$$
  \sup_{x\in\Omega}\frac{|\vec n_I(x)|}{n_I^0(x)} < 1.
$$
Then there exists a weak solution $\vec n\in L^2(0,T;H^1(\Omega))^3
\cap H^1(0,T;H^{-1}(\Omega))^3$ to \eqref{3.eq.nj}-\eqref{3.eq.j} and
\eqref{3.bc}-\eqref{3.ic} satisfying
\begin{equation}\label{vecn1}
  \sup_{(x,t)\in\Omega_T}\frac{|\vec n(x,t)|}{n_0(x,t)} < 1.
\end{equation}
Furthermore, there exists at most one weak solution satisfying \eqref{vecn1}
and $\vec n\in L^\infty(0,T;$ $W^{1,4}(\Omega))^3$.
\end{theorem}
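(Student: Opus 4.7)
My plan is to (i) solve a regularised system in which the singular nonlinearity $b_k[\vec n/n_0]$ is replaced by a globally defined surrogate, (ii) propagate the strict bound $|\vec n|/n_0<1$ via a parabolic equation for $u=1-|\vec n|^2/n_0^2$ as sketched in the introduction, (iii) pass to the limit in the regularisation, and (iv) prove uniqueness by an $L^2$-difference estimate exploiting the added $W^{1,4}$-regularity.

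\textbf{Step 1: regularised problem.} For $\delta>0$, replace $b_k[\vec v]$ by $b_k^{(\delta)}[\vec v]:=b_k[\vec v\,\chi_\delta(|\vec v|)]$, where $\chi_\delta$ is a smooth cut-off equal to $1$ on $[0,1-2\delta]$ and $0$ on $[1-\delta,\infty)$, so that $b_k^{(\delta)}$ is bounded and Lipschitz on $\R^3$. The leading diffusion operator in \eqref{3.eq.j} is $\partial_s((\delta_{j\ell}+b_k^{(\delta)}\eta_{jk\ell})\partial_s n_\ell)$; since $\eta_{jk\ell}$ is antisymmetric in $(j,\ell)$, the associated quadratic form reduces to the identity, $\xi_j(\delta_{j\ell}+b_k^{(\delta)}\eta_{jk\ell})\xi_\ell=|\xi|^2$. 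The remaining drift and reaction terms are linear in $\vec n$ and $\nabla\vec n$ with coefficients controlled by $\|V\|_{W^{1,\infty}}$ and $\|1/n_0\|_{L^\infty}$ from Theorem~\ref{thm.n0}. A standard Faedo--Galerkin scheme then produces a weak solution $\vec n^{(\delta)}\in L^2(0,T;H_0^1(\Omega))^3\cap H^1(0,T;H^{-1}(\Omega))^3$ with a $\delta$-uniform energy estimate.

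\textbf{Step 2: the $u$-equation and Stampacchia's lemma.} Set $u:=1-|\vec n^{(\delta)}|^2/n_0^2$. Multiplying \eqref{3.eq.nj}--\eqref{3.eq.j} by $-2\vec n^{(\delta)}/n_0^2$, using \eqref{3.eq.n0} to compute $\partial_t(1/n_0^2)$, and carrying out the algebra, I expect (as announced in the introduction) the antisymmetric contributions arising from the $b_k$-dependent pieces to cancel when contracted with $\vec n^{(\delta)}$, yielding
\begin{equation*}
  \partial_t u-\Delta u-\nabla(\log n_0+V)\cdot\nabla u = 2G[\vec n^{(\delta)}/n_0]\ge 0,
\end{equation*}
at least where the truncation is inactive. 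Since $u=1$ on $\partial\Omega$ and $\inf_\Omega u(\cdot,0)>0$ by the hypothesis $|\vec n_I|/n_I^0<1$, Stampacchia's truncation applied to $(u_\ast e^{-Kt}-u)^+$ for $u_\ast>0$ small and $K$ large enough to absorb the zeroth-order contribution gives a positive lower bound $u\ge u_\ast e^{-Kt}$ uniform in $\delta$. Consequently $|\vec n^{(\delta)}|/n_0\le\sqrt{1-u_\ast e^{-Kt}}<1$, so $\chi_\delta\equiv 1$ along the solution and $b_k^{(\delta)}[\vec n^{(\delta)}/n_0]=b_k[\vec n^{(\delta)}/n_0]$ for all sufficiently small $\delta$.

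\textbf{Step 3: passage to the limit and uniqueness.} The uniform energy estimate together with $\partial_t\vec n^{(\delta)}$ bounded in $L^2(0,T;H^{-1}(\Omega))^3$ yields, via Aubin--Lions, $\vec n^{(\delta)}\to\vec n$ strongly in $L^2(\Omega_T)^3$ and a.e.; continuity of $b_k$ on $\{|\vec v|\le 1-\eta\}$ with $\eta$ uniform (from Step~2) allows passage to the limit in all nonlinear terms, producing a weak solution $\vec n$ satisfying \eqref{vecn1}. For uniqueness, given two solutions $\vec n^1,\vec n^2$ with \eqref{vecn1} and $\vec n^i\in L^\infty(0,T;W^{1,4}(\Omega))^3$, I test the difference equation with $\vec n^1-\vec n^2$. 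The local Lipschitz property of $b_k$ on $\{|\vec v|\le 1-\eta\}$ bounds $|b_k[\vec n^1/n_0]-b_k[\vec n^2/n_0]|$ by $C|\vec n^1-\vec n^2|$, and the dangerous terms of the form $(b_k[\vec n^1/n_0]-b_k[\vec n^2/n_0])\partial_s n^i_\ell$ are controlled in $L^2$ by Hölder using $\|\nabla\vec n^i\|_{L^4}$ together with the Sobolev embedding $H^1(\Omega)\hookrightarrow L^4(\Omega)$ in two dimensions, so Gronwall closes the argument.

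\textbf{Main obstacle.} The delicate step is the algebraic cancellation producing the clean $u$-equation with a nonnegative right-hand side in Step~2: one must track the antisymmetric contributions from the $b_k\eta_{jk\ell}$ and $b_k(\delta_{jk}n_s-\delta_{js}n_k)$ parts of $J_{js}$ and from the $b_k\partial_k n_j-b_j\vec\nabla\cdot\vec n$ part of $F_j$, and verify that once contracted with $\vec n^{(\delta)}/n_0^2$ they collapse into the Laplacian plus a manifestly nonnegative remainder. Everything else is then standard parabolic theory.
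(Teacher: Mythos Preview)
Your strategy matches the paper's: truncate $b_k$, obtain a solution to the truncated problem, derive the parabolic equation for $u=1-|\vec n|^2/n_0^2$, apply Stampacchia to get a uniform lower bound for $u$, remove the truncation, and finish with an $L^2$ Gronwall argument for uniqueness using the $W^{1,4}$ regularity. The differences are cosmetic (Faedo--Galerkin versus the paper's Leray--Schauder fixed point; your test function $(u_\ast e^{-Kt}-u)^+$ versus the paper's $(u-k)^-$).

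There is, however, one genuine omission. Before the $u$-equation argument can be made rigorous you need $\vec n^{(\delta)}\in L^\infty(\Omega_T)$; otherwise $u=1-|\vec n^{(\delta)}|^2/n_0^2$ is not known to lie in $L^2(0,T;H^1(\Omega))$ (the gradient $\nabla(|\vec n^{(\delta)}|^2)=2\vec n^{(\delta)}\cdot\nabla\vec n^{(\delta)}$ requires an $L^\infty$ factor on $\vec n^{(\delta)}$), so the truncation $(u-k)^-$ is not an admissible test function. The paper inserts an intermediate step for exactly this reason: with $\psi=\sqrt{1+|\vec n|^2}$ one computes
\[
\partial_t\psi-\diver(\nabla\psi+\psi\nabla V)=-\psi^{-1}\Delta V-\psi^{-1}\bigl(G[\vec n]-|\nabla\psi|^2\bigr)\le M/\lambda_D^2,
\]
since $G[\vec n]-|\nabla\psi|^2=\psi^2 G[\vec n/\psi]+|\nabla(\psi^2)|^2/(2\psi^4)\ge 0$, and a comparison argument then gives $\psi\le c(T)$, hence $|\vec n^{(\delta)}|\in L^\infty(\Omega_T)$. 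You should insert this before your Step~2.

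A second, smaller point: your hedge ``at least where the truncation is inactive'' is both unnecessary and, as stated, insufficient. Unnecessary because both your truncation and the paper's preserve the structure $b_k^{(\delta)}[\vec v]=f(|\vec v|)\,v_k$, and it is precisely this structure (together with the antisymmetry of $\eta_{jk\ell}$) that makes the $b_k$-contributions drop out when the equation for $\vec n$ is contracted with $\vec n/n_0^2$; the clean $u$-equation therefore holds globally for the truncated problem. Insufficient because if the identity were only valid on the set where the truncation is inactive, Stampacchia would not give a bound on all of $\Omega_T$ and the circularity you are trying to avoid would reappear. So carry out the algebra once with a generic coefficient of the form $b_k=f(|\vec n/n_0|)\,n_k/n_0$ and observe that all $f$-dependent terms cancel; then Step~2 goes through without any caveat.
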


\begin{proof}
We prove first the existence of solutions to a truncated problem by applying
the Leray-Schauder fixed-point theorem. Let $0<\chi<1$ be a fixed parameter
and let $\phi_\chi\in C^0(\R)$ be a nonincreasing function satisfying
$\phi_\chi(y)=1$ for $y\le 1-\chi$ and $\phi_\chi(y)=0$ for $y\ge 1$.
Then define
$$
  b_k^\chi[\vec v] = \phi_\chi(|\vec v|)b_k[\vec v] \quad\mbox{for all }\vec v\in\R^3.
$$

{\em Step 1: Application of the fixed-point theorem.}
In order to define the fixed-point operator, let $\vec\rho\in L^2(\Omega_T)^3$
and $\sigma\in[0,1]$. We wish to solve the linear problem
\begin{equation}\label{eq.lin}
  \frac{d}{dt}\int_\Omega \vec n\cdot\vec z dx + a(\vec n,\vec z;t) = 0
  \quad\mbox{for all }\vec z\in H_0^1(\Omega)^3, \quad
  \vec n(0) = \sigma\vec n_I,
\end{equation}
where 
\begin{align*}
  a(\vec n,\vec z;t)
  &= \int_\Omega\big((\delta_{j\ell}+\sigma b_k^\chi[\vec\rho/n_0]\eta_{jk\ell})
  \pa_s n_\ell + n_j\pa_s V - 2\eta_{js\ell} n_\ell\big)\pa_s z_j dx \\
  &\phantom{xx}{}
  + \sigma\int_\Omega b_k^\chi[\vec\rho/n_0](\delta_{jk}n_s-\delta_{js}n_k)\pa_s z_j dx
  \\
  &\phantom{xx}{}
  - \int_\Omega\big(\eta_{jk\ell}n_k\pa_\ell V - 2n_j + \sigma b_s^\chi[\vec\rho/n_0]
  \pa_s n_j - \sigma b_j^\chi[\vec\rho/n_0]\pa_s n_s\big)z_j dx,
\end{align*}
for $\vec z\in H_0^1(\Omega)^3$. The bilinear form $a:H_0^1(\Omega)^3\times
H_0^1(\Omega)^3\to\R$ is continuous since $|b_k^\chi[\vec\rho/n_0]|\le \lambda$ and
$|\na V|\in L^\infty(0,T;L^\infty(\Omega))$. Furthermore, using the antisymmetry
of $\eta_{jk\ell}$,
\begin{align*}
  a(\vec n,\vec n;t)
  &= \int_\Omega\big(\|\na\vec n\|^2 + n_j\pa_s V\pa_s n_j 
  - 2\eta_{js\ell} n_\ell\pa_s n_j\big) \\
  &\phantom{xx}{}
  + \sigma\int_\Omega b_k^\chi[\vec\rho/n_0](\delta_{jk}n_s-\delta_{js}n_k)\pa_s n_j dx
  \\
  &\phantom{xx}{}
  + \int_\Omega\big(2n_j - \sigma b_s^\chi[\vec\rho/n_0]
  \pa_s n_j + \sigma b_j^\chi[\vec\rho/n_0]\pa_s n_s\big)n_j dx,
\end{align*}
where $\|\na\vec n\|^2=\sum_{j,k=1}^3(\pa_j n_k)^2$. 
All the terms on the right-hand side
can be written as a product of $n_j$, $\pa_k n_\ell$, and possibly an $L^\infty$
function. Note that the only term, which does not have this structure,
$b_k^\chi\eta_{jk\ell}\pa_s n_\ell\pa_s n_j$, vanishes because of the
antisymmetry of $\eta_{jk\ell}$.
Therefore, the H\"older and Cauchy-Schwarz inequalities yield
$$
  a(\vec n,\vec n;t) \ge \frac12\|\vec n\|_{H^1(\Omega)}^2 
  - c\|\vec n\|_{L^2(\Omega)}^2
$$
for some constant $c>0$ which depends on the $L^\infty$ norm of $\na V$.
Hence, there exists a unique weak solution $\vec n\in L^2(0,T;H^1_0(\Omega))^3\cap
H^1(0,T;H^{-1}(\Omega))^3$ to the linear problem \eqref{eq.lin}
\cite[Corollary 23.26]{Zei90}. Moreover, there exists a constant $c>0$ independent
of $\rho$ and $\sigma$ such that
\begin{equation}\label{est.vecn}
  \|\vec n\|_{L^2(0,T;H^1(\Omega))^3} + \|\pa_t\vec n\|_{L^2(0,T;H^{-1}(\Omega))^3}
  \le c.
\end{equation}

This defines the fixed-point operator $F:L^2(\Omega_T)^3\times[0,1]\to L^2(\Omega_T)^3$,
$F(\vec\rho,\sigma)=\vec n$. We note that $F(\vec\rho,0)=0$.

Next, we show that $F$ is continuous. Let $(\vec\rho^{(k)})\subset L^2(\Omega_T)^3$
and $(\sigma^{(k)})\subset\R$ such that $\vec\rho^{(k)}\to\vec\rho$ in
$L^2(\Omega_T)^3$ and $\sigma^{(k)}\to\sigma$ as $k\to\infty$. Since $b_k^\chi$ is
bounded, it follows that $b_j^\chi[\rho^{(k)}/n_0]\to b_j^\chi[\rho/n_0]$
in $L^r(\Omega_T)$ for all $r<\infty$. Let $n^{(k)}=F(\rho^{(k)},\sigma^{(k)})$.
The uniform estimate \eqref{est.vecn} shows that, up to a subsequence,
\begin{align*}
  \vec n^{(k)}\rightharpoonup \vec n &\quad\mbox{weakly in }L^2(0,T;H^1(\Omega))^3
  \mbox{ and in }H^1(0,T;H^{-1}(\Omega))^3, \\
  \vec n^{(k)}\to \vec n &\quad\mbox{strongly in }L^2(\Omega_T)^3,
\end{align*}
since the embedding 
$L^2(0,T;H^1(\Omega))\cap H^1(0,T;H^{-1}(\Omega))\hookrightarrow L^2(\Omega_T)$
is compact, by the Aubin lemma. These convergence results are sufficient to
perform the limit $k\to\infty$ in the weak formulation of \eqref{eq.lin}
with $\vec n^{(k)}$ instead of $\vec n$ and $\sigma^{(k)}$ instead of $\sigma$. 
The limit equation shows that
$\vec n=F(\vec\rho,\sigma)$. As the solution to the linear problem is unique, the
convergence $\vec n^{(k)}\to\vec n$ in $L^2(\Omega_T)^3$
holds for the whole sequence, and hence, $S$ is continuous. Furthermore, by the
Aubin lemma, $F$ is compact. Finally, let $\vec n$ be a fixed point of
$F(\cdot,\sigma)$. By the boundedness of $b_j^\chi$ and estimate \eqref{est.vecn},
we find uniform estimates for $\vec n$ in $L^2(\Omega_T)^3$. Therefore, we can apply
the fixed-point theorem of Leray-Schauder which yields a weak solution to the 
truncated problem \eqref{eq.lin} with $\vec\rho$ replaced by $\vec n$.

{\em Step 2: $L^\infty$ bounds for $\vec n$.}
We prove that the solution to \eqref{eq.lin} is bounded in $\Omega_T$.
To this end, we define the function $\psi=\sqrt{1+|\vec n|^2}$.
Then, in the sense of distributions, 
$$
  \pa_t\psi = \frac{\pa_t\vec n\cdot\vec n}{\psi}
  = \frac{1}{2\psi}\big(\Delta(|\vec n|^2) + 2\diver(|\vec n|^2\na V)
  - \na V\cdot\na|\vec n|^2 - 2G[\vec n]\big)
$$
where 
\begin{equation}\label{G}
  G[\vec n]= \|\na\vec n\|^2 + 2\vec n\cdot\textrm{curl}\,\vec n + 2|\vec n|^2.
\end{equation} 
Inserting the identities
\begin{align*}
  \frac{1}{2\psi}\Delta(|\vec n|^2) &= \Delta\psi + \frac{1}{\psi}|\na\psi|^2, \\
  \frac{1}{\psi}\diver(|\vec n|^2\na V) &= \diver(\psi\na V) + \na V\cdot\na\psi
  - \frac{\Delta V}{\psi}, \\
  \frac{1}{2\psi}\na V\cdot\na|\vec n|^2 &= \na V\cdot\na\psi
\end{align*}
in the above equation for $\psi$, we deduce that
\begin{equation}\label{eq.psi}
  \pa_t\psi - \diver(\na\psi+\psi\na V) = - \psi^{-1}\Delta V 
  - \psi^{-1}\big(G[\vec n]-|\na\psi|^2\big).
\end{equation}

Since $|\textrm{curl}\,\vec v|^2 \le 2\|\na\vec v\|^2$ for all 
$\vec v\in H_0^1(\Omega)^3$, Young's inequality gives
\begin{equation}\label{G.pos}
  G[\vec v] \ge \|\na\vec v\|^2 - \left(2|\vec v|^2 + \frac12|\textrm{curl}\,\vec v|^2
  \right) + 2|\vec v|^2 \ge 0.
\end{equation}
Elementary computations show that 
$$
  G[\vec n]-|\na\psi|^2 = \psi^2 G\left[\frac{\vec n}{\psi}\right] 
  + \frac{|\na(\psi^2)|^2}{2\psi^4} \ge 0.
$$
Hence, we can estimate \eqref{eq.psi} by
$$
  \pa_t\psi - \diver(\na\psi+\psi\na V) \le \frac{1}{\lambda_D^2\psi}(n_0-C(x))
  \le \frac{n_0}{\lambda_D^2\psi} \le \frac{M}{\lambda_D^2}.
$$
By the maximum principle, $\psi\le c(T)$ in $\Omega_T$, where $c(T)>0$ depends
on the end time $T>0$. Taking into account that $\psi\ge 1$ by definition,
we conclude that $\psi\in L^\infty(0,T;L^\infty(\Omega))$ and hence,
$|\vec n|\in L^\infty(0,T;L^\infty(\Omega))$.

{\em Step 3: Proof of $|\vec n|/n_0<1$.} We show that there exists $0<\kappa<1$
such that $|\vec n|/n_0\le\kappa<1$. 
Then, choosing $\chi>0$ sufficiently small, we can remove the
truncation obtaining a solution to the original problem.

Let $u=1-|\vec n|^2/n_0^2$. Note that this function is well defined since $n_0$
is strictly positive, see Theorem \ref{thm.n0}. A tedious computation shows that
$u$ solves 
$$
  \pa_t u - \Delta u = \na(\log n_0+V)\cdot\na u + 2G[\vec n/n_0].
$$
in the sense of distributions, where $G$ is defined in \eqref{G}. We prove a lower
bound for $u$ by testing the weak formulation of this equation by 
$U:=(u-k)^-=\min\{0,u-k\}\in L^2(0,T;H_0^1(\Omega))$, where
$k=\min\{\inf_{\pa\Omega\times(0,T)}u,\inf_{\Omega\times\{0\}}u\}>0$.
\begin{equation}\label{u2}
  \frac12\,\frac{d}{dt}\int_\Omega U^2 dx + \int_\Omega|\na U|^2 dx
  = \int_\Omega U\na(\log n_0+V)\cdot\na Udx + 2\int_\Omega UG[\vec n/n_0]dx.
\end{equation}
Since $G(\vec n/n_0]\ge 0$, the last integral is nonpositive. The first integral
is estimated by employing the lower bound for $n_0$, obtained in Theorem \ref{thm.n0},
and applying Young's and H\"older's inequalities:
\begin{align*}
  \int\Omega U\na(\log n_0+V)\cdot\na Udx
  &\le \left(\inf_{\Omega_T}n_0\right)^{-1}\int_\Omega|U||\na n_0||\na U|dx \\
  &\le \frac{\eps}{2}\int_\Omega|\na U|^2 dx 
  + c(\eps)\int_\Omega |U|^2|\na n_0|^2 dx \\
  &\le \frac{\eps}{2}\int_\Omega|\na U|^2 dx 
  + c(\eps)\|\na n_0\|_{L^4(\Omega)}^2\|U\|_{L^2(\Omega)}^2.
\end{align*}
The Gagliardo-Nirenberg, H\"older, and Poincar\'e inequalities imply that
$$
  \|U\|_{L^4(\Omega)}^2 \le c\|U\|_{L^2(\Omega)}\|U\|_{H^1(\Omega)}
  \le \frac{\eps}{2}\|\na U\|_{L^2(\Omega)}^2 + c(\eps)\|U\|_{L^2(\Omega)}^2,
$$
where $\eps>0$.
By Theorem \ref{thm.n0}, $\na n_0$ is an element of $L^\infty(0,T;W^{1,4}(\Omega))$,
in view of the embedding $H^2(\Omega)\hookrightarrow W^{1,4}(\Omega)$.
Collecting the above estimates, we infer that
$$
  \int_\Omega U\na(\log n_0+V)\cdot\na Udx
  \le \eps\int_\Omega|\na U|^2 dx + c(\eps)\int_\Omega U^2 dx.
$$
Inserting this inequality into \eqref{u2} and choosing $\eps<2$, it follows that
$$
  \frac12\,\frac{d}{dt}\int_\Omega U^2 dx \le c(\eps)\int_\Omega U^2 dx.
$$
Then Gronwall's lemma and $U(\cdot,0)=0$ gives $U=0$ in $\Omega_T$ and 
hence, $u\ge k>0$ in $\Omega_T$.

{\em Step 4: Uniqueness of solutions.}
Let $\vec u$ and $\vec v$ be two solutions to \eqref{3.eq.nj} and 
\eqref{3.bc}-\eqref{3.ic} satisfying \eqref{vecn1} and
$\vec u\in L^\infty(0,T;W^{1,4}(\Omega))^3$.
Set $\vec w=\vec u-\vec v$. Taking the
difference of the equations satisfied by $\vec u$ and $\vec v$, respectively,
and employing $\vec w$ as a test function, we find that
\begin{align*}
  \frac12\,\frac{d}{dt} & \int_\Omega |\vec w|^2 dx
  + \int_\Omega\|\na\vec w\|^2 dx 
  \le \int_\Omega\big\{-w_j\pa_k w_j\pa_k V + 2\eta_{jk\ell}w_\ell\pa_k w_j \\
  &\phantom{xx}{}
  - (b_k[\vec u]-b_k[\vec v])(\eta_{jk\ell}\pa_s u_\ell + \delta_{jk}u_s
  - \delta_{js}u_k)\pa_s w_j
  - b_k[\vec v](\delta_{jk}w_s-\delta_{js}w_k)\pa_s w_j\big\}dx, \\
  &\phantom{xx}{}
  + \int_\Omega\big\{(\eta_{jk\ell}w_k\pa_\ell V-2w_j)w_j
  + [(b_s[\vec u]-b_s[\vec v])\pa_s u_j + b_s[\vec v]\pa_s w_j]w_j \\
  &\phantom{xx}{}
  - [(b_j[\vec u]-b_j[\vec v])\pa_s u_s + b_j[\vec v]\pa_s w_s]w_j\big\}dx.
\end{align*}
Thanks to the $L^\infty$ bounds on $\na V$, $\vec u$, and $\vec v$, 
we can estimate as follows:
\begin{align*}
  \frac12\,\frac{d}{dt} \int_\Omega |\vec w|^2 dx
  + \int_\Omega\|\na\vec w\|^2 dx 
  &\le c\int_\Omega\big(|\vec w|\,\|\na\vec w\| + \|\na\vec u\|\,|\vec w|^2
  + \|\na\vec u\|\,|\vec w|\,|\na\vec w\|\big)dx \\
  &\le \frac12\int_\Omega\|\na\vec w\|^2 dx 
  + c\|\na\vec u\|_{L^4(\Omega_T)}(1+\|\na\vec u\|_{L^4(\Omega_T)})
  \int_\Omega|\vec w|^2 dx,
\end{align*}
where $c>0$ is some generic constant.
Since $\vec w(0)=0$, the $W^{1,4}$ regularity for $\vec u$ and
Gronwall's lemma imply the assertion.
This finishes the proof.
\end{proof}

%%%%%%%%%%%%%%%%%%%%%%%%%%%%%%

\subsection{Entropy dissipation}\label{sec.entropy}

Let $(n_0,\vec n,V)$ be a solution to \eqref{3.eq.n0}-\eqref{3.eq.V},
\eqref{3.bc}-\eqref{3.ic} according to Theorems \ref{thm.n0} and \ref{thm.vecn}.
We assume that the boundary data is in thermal equilibrium, i.e.
\begin{equation}\label{bc.te}
  n_D = e^{-V_D}, \quad V = V_D, \quad \vec n = 0\quad\mbox{on }\pa\Omega,
\end{equation}
where $V_D=V_D(x)$ is time-independent.
In this subsection, we will show that the macroscopic entropy
\begin{align*}
  S(t) &= \int_\Omega\left(\frac12(n_0+|\vec n|)\big(\log(n_0+|\vec n|)-1\big)
  + \frac12(n_0-|\vec n|)\big(\log(n_0-|\vec n|)-1\big)\right. \\
  &\phantom{xx}{}
  + (n_0-C(x))V - \left.\frac{\lambda_D^2}{2}|\na V|^2\right)dx
%  + \frac{\lambda_D^2}{2}|\na(V-U)|^2 - Un_0\right)dx
\end{align*}
is nonincreasing in time. Note that $n_0<|\vec n|$ by Theorem \ref{thm.vecn}
such that $\log(n_0-|\vec n|)$ is well defined. 

The functional $S(t)$ can be derived as follows. Inserting the thermal equilibrium
distribution $g[n_0,\vec n]$ in the quantum entropy $A[w]$, defined in
Section \ref{sec.g}, and taking into account the electric energy contribution,
it follows that the total macroscopic free energy reads as
$$
  \widetilde S(t) 
  = A[g[n_0,\vec n]] - \int_\Omega\left(C(x)V 
  + \frac{\lambda_D^2}{2} |\na V|^2\right)dx.
$$
Then the expansion of $g[n_0,\vec n]$ (see \eqref{gn}, \eqref{g.exp}, and \eqref{g.ab}) yields the above formula for $\widetilde S(t)= S(t)+O(\eps^2)$.

\begin{proposition}
The entropy dissipation $-dS/dt$ can be written as
\begin{align*}
  \frac{dS}{dt}
  &= -\frac12\int_\Omega(n_0+|\vec n|)|\na(\log(n_0+|\vec n|)+V)|^2 dx \\
  &\phantom{xx}{}-\frac12\int_\Omega(n_0-|\vec n|)|\na(\log(n_0-|\vec n|)+V)|^2 dx \\
  &\phantom{xx}{}
  - \frac12\int_\Omega |\vec n|\log\left(\frac{n_0+|\vec n|}{n_0-|\vec n|}\right)
  G[\vec n/|\vec n|]dx \le 0,
\end{align*}
where $G$ is defined in \eqref{G}.
\end{proposition}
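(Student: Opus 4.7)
The plan is to rewrite the integrand of $S$ in terms of the band densities $s_\pm := n_0 \pm |\vec n|$ and the associated chemical potentials $\mu_\pm := \log s_\pm + V$. The boundary conditions \eqref{bc.te} yield $\mu_\pm = 0$ on $\pa\Omega$, and the time-independence of $V_D$ gives $\pa_t V \in H_0^1(\Omega)$. Differentiating $S$ in time by the chain rule and testing the Poisson equation \eqref{3.eq.V} by $\pa_t V$ to cancel the two electric-energy contributions $\int(n_0-C)\pa_t V\,dx - \lambda_D^2\int\na V\cdot\na\pa_t V\,dx$, a short rearrangement yields
\begin{equation*}
\frac{dS}{dt} = \frac12\int_\Omega \mu_+\,\pa_t s_+\,dx + \frac12\int_\Omega \mu_-\,\pa_t s_-\,dx.
\end{equation*}

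The central claim is the pair of pointwise identities
\begin{equation*}
\pa_t s_\pm = \diver(s_\pm\na\mu_\pm) \mp |\vec n|\,G[\vec n/|\vec n|],
\end{equation*}
which I shall refer to as $(\star)$, with $\vec e := \vec n/|\vec n|$. The rewriting $J_0 = \na n_0 + n_0\na V = \frac12(s_+\na\mu_+ + s_-\na\mu_-)$ together with $\pa_t n_0 = \diver J_0$ immediately gives the sum of the two cases of $(\star)$. For the difference I use $|\vec n|\,\pa_t|\vec n| = \vec n\cdot\pa_t\vec n$ and read $\pa_t\vec n$ off \eqref{3.eq.nj}--\eqref{3.eq.j}. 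After discarding the $b$-terms (handled separately below) and using $\vec n\cdot(\vec n\wedge\na V) = 0$, the right-hand side reduces to
\begin{equation*}
\tfrac12\Delta|\vec n|^2 - \|\na\vec n\|^2 + \tfrac12\na|\vec n|^2\cdot\na V + |\vec n|^2\Delta V - 2\vec n\cdot\textrm{curl}\,\vec n - 2|\vec n|^2.
\end{equation*}
The decompositions $\|\na\vec n\|^2 = |\na|\vec n||^2 + |\vec n|^2\|\na\vec e\|^2$ and $\vec n\cdot\textrm{curl}\,\vec n = |\vec n|^2\vec e\cdot\textrm{curl}\,\vec e$ (both consequences of $\vec n = |\vec n|\vec e$ and $\vec e\cdot\pa_s\vec e = 0$) then collapse the quotient by $|\vec n|$ exactly into $\diver(\na|\vec n| + |\vec n|\na V) - |\vec n|\,G[\vec e]$, which equals $\frac12\diver(s_+\na\mu_+ - s_-\na\mu_-) - |\vec n|\,G[\vec e]$ and is $(\star)$ in the difference.

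The main obstacle is showing that the $b$-terms in \eqref{3.eq.nj}--\eqref{3.eq.j} contribute nothing to $\vec n\cdot\pa_t\vec n$. The structural observation enabling this is that $(b_1,b_2,b_3) = b\,\vec n/|\vec n|$ for a scalar $b$, i.e.\ $\vec b\parallel\vec n$; this is the ``antisymmetric cancellation'' invoked in Step~3 of the proof of Theorem~\ref{thm.vecn}. I would verify the vanishing in two pieces: (i) the product rule applied to $n_j\pa_s(b_k\eta_{jk\ell}\pa_s n_\ell)$ produces $\pa_s((\vec n\wedge\vec b)_\ell\pa_s n_\ell) - b_k\eta_{jk\ell}\pa_s n_j\pa_s n_\ell$, and both terms vanish (the first since $\vec n\wedge\vec b = 0$, the second since $\eta_{jk\ell}$ is antisymmetric in $(j,\ell)$); (ii) the three remaining $b$-terms, once contracted with $n_j$ and rewritten with $\vec b = b\vec e$ and $\vec e\cdot\pa_s\vec e = 0$, combine into $b|\vec n|\bigl[\diver(|\vec n|\vec e) - \diver\vec n\bigr] = 0$. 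Granted $(\star)$, integration by parts in the first displayed identity (the boundary terms vanish because $\mu_\pm = 0$ on $\pa\Omega$) produces the two quadratic terms $-\frac12\int_\Omega s_\pm|\na\mu_\pm|^2\,dx$, while the reaction contributions recombine through $\mu_+ - \mu_- = \log\frac{n_0+|\vec n|}{n_0-|\vec n|}$ into the third term of the stated formula. Nonnegativity of each summand then follows from $s_\pm > 0$, the estimate \eqref{G.pos}, and $|\vec n| < n_0$.
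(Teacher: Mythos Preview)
Your proposal is correct and follows essentially the same route as the paper: both hinge on the Poisson cancellation $\int(n_0-C)\pa_t V\,dx = \lambda_D^2\int\na V\cdot\na\pa_t V\,dx$, the identity $\vec n\cdot\pa_t\vec n = \tfrac12\Delta|\vec n|^2 + \diver(|\vec n|^2\na V) - \tfrac12\na|\vec n|^2\cdot\na V - G[\vec n]$ (which requires the $b$-term cancellation you spell out), and the splitting $G[\vec n] = |\na|\vec n||^2 + |\vec n|^2 G[\vec e]$.

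The organizational difference is that you introduce the band variables $s_\pm = n_0\pm|\vec n|$ and chemical potentials $\mu_\pm = \log s_\pm + V$ at the outset and isolate the clean pointwise identities $(\star)$: $\pa_t s_\pm = \diver(s_\pm\na\mu_\pm)\mp|\vec n|\,G[\vec e]$. The paper instead works directly inside the integral $dS/dt$, manipulating the integrand by hand until the $n_0\pm|\vec n|$ structure emerges at the end. Your route is arguably more transparent, and it has the bonus of making explicit the $b$-term cancellation that the paper simply asserts when writing down the formula for $\vec n\cdot\pa_t\vec n$. One small inaccuracy: there are four remaining $b$-terms after (i), not three; but your conclusion is right, since after contraction with $n_j$ they reduce to $\sum_{s,j}(b_s n_j - b_j n_s)\pa_s n_j$, which vanishes because $\vec b\parallel\vec n$.
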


Note that in the drift-diffusion model without spin contribution, i.e.\
$|\vec n|=0$, we recover the standard entropy dissipation term
$\int_\Omega n_0|\na(\log n_0+V)|^2 dx$.

\begin{proof}
Taking the time derivative of
$S$, we find after some computations that
\begin{align}\label{dSdt}
  \frac{dS}{dt} &= \int_\Omega\bigg(
  \left(\frac12\log(n_0^2-|\vec n|^2)+V\right)\pa_t n_0
  + \frac12\log\left(\frac{n_0+|\vec n|}{n_0-|\vec n|}\right)\frac{\vec n}{|\vec n|}
  \cdot\pa_t\vec n \\
%  + \lambda_D^2\na(V-U)\cdot\na V_t - U\pa_t n_0\right)dx.
  &\phantom{xx}{}+ (n_0-C(x))\pa_t V - \lambda_D^2\na V\cdot\na\pa_t V \bigg)dx.
  \nonumber
\end{align}
To be precise, the second term in the integral has to be understood in the
sense of $L^2(0,T;H^{-1}(\Omega))$. Since $V_D$ does not depend on time,
we have $\pa_t V=0$ on $\pa\Omega$, $t>0$. Hence,
$$
  \int_\Omega(n_0-C(x))\pa_t V dx = -\lambda_D^2\int_\Omega\Delta V\pa_t V dx
  = \lambda_D^2\int_\Omega\na V\cdot\na\pa_t V dx,
$$
and thus, the last two terms in $dS/dt$ cancel. 

In order to compute the second term in $dS/dt$, we observe that
$$
  \vec n\cdot\pa_t\vec n = \frac12\Delta(|\vec n|^2)
  + \diver(|\vec n|^2\na V) - \frac12\na(|\vec n|^2)\cdot\na V - G[\vec n],
$$
where $G$ is defined in \eqref{G}. Then, inserting this expression and 
\eqref{1.eq.n0} into \eqref{dSdt} and integrating by parts, we infer that
\begin{align*}
  \frac{dS}{dt} &= -\int_\Omega\bigg\{\na\left(\frac12\log(n_0^2-|\vec n|^2)+V\right)
  \cdot(\na n_0+n_0\na V) \\
  &\phantom{xx}{}
  + \frac12\na\left(\frac{1}{|\vec n|}\log\frac{n_0+|\vec n|}{n_0-|\vec n|}\right)
  \cdot\left(\frac12\na(|\vec n|^2) + |\vec n|^2\na V\right) \\
  &\phantom{xx}{}+ \frac{1}{2|\vec n|}\log\frac{n_0+|\vec n|}{n_0-|\vec n|}
  \left(\frac12\na(|\vec n|^2)\cdot\na V + G[\vec n]\right)\bigg\}dx.
\end{align*}
Since
$$
  \na\left(\frac{1}{|\vec n|}\log\frac{n_0+|\vec n|}{n_0-|\vec n|}\right)
  = \frac{1}{|\vec n|}\na\left(\log\frac{n_0+|\vec n|}{n_0-|\vec n|}\right)
  + \na\left(\frac{1}{|\vec n|}\right)\log\frac{n_0+|\vec n|}{n_0-|\vec n|},
$$
we can write
\begin{align*}
  \frac{dS}{dt} &= -\int_\Omega\bigg\{\na\left(\frac12\log(n_0^2-|\vec n|^2)+V\right)
  \cdot(\na n_0+n_0\na V) \\
  &\phantom{xx}{}
  + \frac{1}{2|\vec n|}\na\left(\log\frac{n_0+|\vec n|}{n_0-|\vec n|}\right)
  \cdot\left(\frac12\na(|\vec n|^2) + |\vec n|^2\na V\right)\bigg\}dx \\
  &\phantom{xx}{}
  -\frac12 \int_\Omega\log\frac{n_0+|\vec n|}{n_0-|\vec n|}
  \bigg\{\na\left(\frac{1}{|\vec n|}\right)\cdot
  \left(\frac12\na(|\vec n|^2) + |\vec n|^2\na V\right) \\
  &\phantom{xx}{}
  + \frac{1}{|\vec n|}\left(\frac12\na(|\vec n|^2)\cdot\na V 
  + G[\vec n]\right)\bigg\}dx.
\end{align*}
Straightforward computations show that
$$
  |\vec n|G\left[\frac{\vec n}{|\vec n|}\right]
  = \na\left(\frac{1}{|\vec n|}\right)\cdot
  \left(\frac12\na(|\vec n|^2) + |\vec n|^2\na V\right)
  + \frac{1}{|\vec n|}\left(\frac12\na(|\vec n|^2)\cdot\na V 
  + G[\vec n]\right),
$$
which allows us to reformulate the second integral. 
Together with some manipulations in the first integral, we obtain
\begin{align*}
  \frac{dS}{dt} &= -\frac12\int_\Omega\big\{\na(\log(n_0+|\vec n|)+V)\cdot
  \big(\na(n_0+|\vec n|)+(n_0+|\vec n|)\na V\big) \\
  &\phantom{xx}{}
  + \na(\log(n_0-|\vec n|)+V)\cdot
  \big(\na(n_0-|\vec n|)+(n_0-|\vec n|)\na V\big)\big\}dx \\
  &\phantom{xx}{}
  - \frac12\int_\Omega |\vec n|\log\left(\frac{n_0+|\vec n|}{n_0-|\vec n|}\right)
  G\left[\frac{\vec n}{|\vec n|}\right]dx.
\end{align*}
We observe that the expression 
$$
  |\vec n|\log\left(\frac{n_0+|\vec n|}{n_0-|\vec n|}\right)
  G\left[\frac{\vec n}{|\vec n|}\right] 
  = \frac{1}{n_0}\frac{1}{|\vec{n}|/n_0}\log\left(
  \frac{1 + |\vec{n}|/n_0}{1 - |\vec{n}|/n_0}\right)|\vec{n}|^2 
  G\left[\frac{\vec n}{|\vec n|}\right]
$$
is integrable because $\inf_{\Omega_T} n_0 > 0$, $\sup_{\Omega_T} |\vec{n}|/n_0 < 1$,
the map
$$
  (0,1-\eps)\to\R, \quad 
  x\mapsto \frac{1}{x}\log\left(\frac{1+ x}{1 - x}\right)
$$ 
is bounded for all $\eps > 0$ and $|\vec{n}|^2 G[\vec n/|\vec n|]\in L^1(\Omega_T)$.
Since
$$
  \na(n_0\pm |\vec n|)+(n_0\pm |\vec n|)\na V
  = (n_0\pm |\vec n|)\na\big(\log(n_0\pm |\vec n|) + V\big),
$$
this finishes the proof.
\end{proof}

%%%%%%%%%%%%%%%%%%%%%%%%%%%%%%%%%%%%%%%%%%%%%%%%%%%%%%%%%%%%%%%%%%%%%%%%%%%%%

\subsection{Long-time decay of the solutions}\label{sec.long}

Let $(n_0,\vec n,V)$ be a solution to \eqref{3.eq.n0}-\eqref{3.eq.V},
\eqref{3.bc}-\eqref{3.ic} according to Theorems \ref{thm.n0} and \ref{thm.vecn}.
We will show that, under suitable assumptions on the electric potential,
the spin vector converges to zero as $t\to\infty$.

\begin{theorem}\label{thm.long}
Let $2<p<\infty$. Then there exists a constant
$\eps_p>0$ such that if the condition 
$\|\na V\|_{L^\infty(0,T;L^\infty(\Omega))}\le \eps_p$ holds then
$$
  \|\vec n(\cdot,t)\|_{L^p(\Omega)} \le \|\vec n_I\|_{L^p(\Omega)} e^{-\kappa_p t},
  \quad t\in(0,T),
$$
for some constant $\kappa_p>0$ which depends on $p$, $\Omega$, and the $L^\infty$
norm of $\na V$. Furthermore, there exists $\eps_2>0$ such that
if $\|\Delta V\|_{L^\infty(0,T;L^\infty(\Omega))}\le\eps_2$ then
$$
  \|\vec n(\cdot,t)\|_{L^2(\Omega)} \le \|\vec n_I\|_{L^2(\Omega)} e^{-\kappa_2 t},
  \quad t\in(0,T),
$$
for some constant $\kappa_2>0$ which depends on $\Omega$ and the $L^\infty$ norm
of $\Delta V$.
\end{theorem}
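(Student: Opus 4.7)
The plan is to derive an $L^p$-differential inequality for $\|\vec n(\cdot,t)\|_{L^p}$ by testing \eqref{3.eq.nj} with $|\vec n|^{p-2}\vec n$. Since $\vec n\in L^\infty(\Omega_T)^3$ by Step 2 of Theorem \ref{thm.vecn} and $\vec n=0$ on $\pa\Omega$, this test function is admissible. After integrating $\int(\pa_t n_j-\pa_s J_{js}-F_j)|\vec n|^{p-2}n_j\,dx=0$ by parts, the $\delta_{j\ell}\pa_s n_\ell$ piece of $J_{js}$ produces the standard $L^p$-dissipation $\int_\Omega|\vec n|^{p-2}\|\na\vec n\|^2 dx+(p-2)\int_\Omega|\vec n|^{p-2}|\na|\vec n||^2 dx$, while the reaction term $-2n_j$ in $F_j$ gives the essential zero-order sink $-2\int_\Omega|\vec n|^p dx$ that will drive the exponential decay.

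Three algebraic cancellations eliminate the most dangerous contributions. First, the precession term in $F_j$ contracts as $\eta_{jk\ell}n_k\pa_\ell V\cdot n_j=\vec n\cdot(\vec n\wedge\na V)=0$. Second, the cross-diffusion $b_k\eta_{jk\ell}\pa_s n_\ell$ paired with $|\vec n|^{p-2}\pa_s n_j$ vanishes because $\eta_{jk\ell}$ is antisymmetric in $(j,\ell)$ while $\pa_s n_j\pa_s n_\ell$ is symmetric. Third, the $b_j\diver\vec n$ piece arising from $\pa_s(b_j n_s)$ inside $\pa_s J_{js}$ cancels exactly with $-b_j\diver\vec n$ in $F_j$. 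Every remaining term is then pointwise dominated by $|\vec n|^{p-1}|\na\vec n|$ or $|\vec n|^p$, with constants depending only on $\lambda$, $\|\na V\|_{L^\infty}$, and the $L^\infty$-norms of $n_0^{\pm 1}$ and $\na n_0$; Cauchy--Schwarz together with Young's inequality then absorbs the $\na\vec n$-factors into the two diffusion sinks at the cost of a constant $C_\lambda$ (independent of $V$) multiplying $\int_\Omega|\vec n|^p dx$.

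The drift term $n_j\pa_s V$ from $J_{js}$ requires separate treatment in the two cases. A direct computation gives $n_j\pa_s(|\vec n|^{p-2}n_j)=\tfrac{p-1}{p}\pa_s|\vec n|^p$. For $p>2$, I apply Young's inequality in the non-integrated form,
$$
  \tfrac{p-1}{p}\Big|\int_\Omega\na V\cdot\na|\vec n|^p dx\Big|
  \le \tfrac{1}{2}(p-2)\int_\Omega|\vec n|^{p-2}|\na|\vec n||^2 dx
  +\frac{C_p\|\na V\|_{L^\infty}^2}{p-2}\int_\Omega|\vec n|^p dx,
$$
so that the first summand is absorbed into the $(p-2)$-diffusion. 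The resulting inequality reads
$$
  \tfrac{d}{dt}\|\vec n\|_{L^p}^p\le -p\bigl(2-C_\lambda-C_p\|\na V\|_{L^\infty}^2\bigr)\|\vec n\|_{L^p}^p,
$$
so choosing $\eps_p$ with $C_\lambda+C_p\eps_p^2<2$ and invoking Gronwall's lemma yields the stated estimate with an explicit rate. For $p=2$ the $(p-2)$-diffusion disappears and Young's trick is unavailable; I instead integrate by parts to rewrite $\int\na V\cdot\na|\vec n|^2 dx=-\int\Delta V|\vec n|^2 dx$, which is bounded by $\|\Delta V\|_{L^\infty}\|\vec n\|_{L^2}^2$, and close the argument by requiring $\|\Delta V\|_{L^\infty}\le\eps_2$ small enough.

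The main obstacle I foresee is the systematic bookkeeping of the $b_k$-dependent cross terms after the three cancellations: the leftover contributions involve $\na\vec b[\vec n/n_0]$ and must be controlled in a suitable space in order to be absorbed into the diffusion. This relies on the strict positivity $n_0\ge me^{-\mu t}$ and the regularity $n_0\in L^\infty(0,T;H^2(\Omega))$ from Theorem \ref{thm.n0}, together with the local Lipschitz continuity of $\vec v\mapsto\vec b[\vec v]$ on $\{|\vec v|\le\kappa\}$, where $\kappa<1$ is the upper bound for $|\vec n|/n_0$ obtained in Step 3 of Theorem \ref{thm.vecn}.
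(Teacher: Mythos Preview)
Your overall strategy (test \eqref{3.eq.nj} with $|\vec n|^{p-2}\vec n$, exploit the antisymmetry cancellations, treat the drift by Young or integration by parts, close with Gronwall) matches the paper's. But you have misidentified where the decay comes from, and this produces a genuine gap in both cases.

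The problem is the term $-2\eta_{js\ell}n_\ell$ in $J_{js}$. After integration by parts against $|\vec n|^{p-2}n_j$ it contributes $2\int_\Omega|\vec n|^{p-2}\,\vec n\cdot\textrm{curl}\,\vec n\,dx$, which is one of your ``remaining terms''. The sharp Young bound is $|2\vec n\cdot\textrm{curl}\,\vec n|\le\|\na\vec n\|^2+2|\vec n|^2$, so absorbing it consumes \emph{exactly} your first diffusion sink $\int|\vec n|^{p-2}\|\na\vec n\|^2\,dx$ together with your zero-order sink $2\int|\vec n|^p\,dx$; hence $C_\lambda\ge 2$ and the condition $C_\lambda+C_p\eps_p^2<2$ can never be met. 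The paper instead groups these three pieces into the single nonnegative quantity $G[\vec n]=\|\na\vec n\|^2+2\vec n\cdot\textrm{curl}\,\vec n+2|\vec n|^2\ge 0$ (see \eqref{G} and \eqref{G.pos}) and discards $p\int|\vec n|^{p-2}G[\vec n]\,dx$ altogether. For $p>2$ the decay then comes not from the reaction $-2n_j$ but from the Poincar\'e inequality applied to the surviving term $\tfrac{4(p-2)}{p}\int|\na|\vec n|^{p/2}|^2\,dx$, half of which remains after the drift is absorbed. Your final differential inequality must be amended accordingly.

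For $p=2$ the gap is more serious: the $(p-2)$-term is absent, and merely discarding $G\ge 0$ leaves $\tfrac{d}{dt}\|\vec n\|_{L^2}^2\le\|\Delta V\|_{L^\infty}\|\vec n\|_{L^2}^2$, which gives no decay at all. The paper needs the strict spectral inequality $\int_\Omega G[\vec u]\,dx\ge c_G\|\vec u\|_{L^2}^2$ for all $\vec u\in H_0^1(\Omega)^3$ with some $c_G>0$ (Lemma~\ref{lem.G}), proved by a Hilbert--Schmidt compactness argument for the operator associated to the bilinear form $B_\mu(\vec u,\vec v)=\int(\na\vec u:\na\vec v+\vec u\cdot\textrm{curl}\,\vec v+\vec v\cdot\textrm{curl}\,\vec u+(2+\mu)\vec u\cdot\vec v)\,dx$. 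This is not a consequence of Young's inequality and is the missing ingredient in your $p=2$ plan.

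Finally, your worry about leftover $\na\vec b$-contributions is unfounded. Because $b_k[\vec v]=\lambda\Phi(|\vec v|)v_k$ is proportional to $v_k$, the tensor $b_k n_j$ is symmetric in $(j,k)$, which forces $b_k\eta_{jk\ell}n_j=0$ and $b_k n_j-b_j n_k=0$; a direct check then shows that every $b$-dependent term cancels pointwise when contracted with $|\vec n|^{p-2}n_j$. The resulting identity is exactly \eqref{eq.np}, and no control on $\na n_0$ is required.
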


Note that we may set $T=\infty$ yielding the desired convergence result.
For the proof of the second part of the theorem, we need the following lemma.

\begin{lemma}\label{lem.G}
There exists a constant $c_G>0$, depending only on $\Omega$, such that 
for all $\vec u\in H^1_0(\Omega)^3$,
$$
  \int_\Omega G[\vec u]dx \ge c_G\int_\Omega |\vec u|^2 dx,
$$
where $G$ is defined in \eqref{G}.
\end{lemma}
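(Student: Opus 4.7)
The plan is to combine a pointwise algebraic rearrangement of $G[\vec u]$ with a Helmholtz-type integration-by-parts identity on $H_0^1(\Omega)^3$. By completing the square in the cross term $2\vec u\cdot\textrm{curl}\,\vec u$, one obtains the pointwise decomposition
$$
  G[\vec u] = \big(\|\vec\na\vec u\|^2 - |\textrm{curl}\,\vec u|^2\big) + |\vec u + \textrm{curl}\,\vec u|^2 + |\vec u|^2.
$$
The last two terms are manifestly nonnegative, and the very last one already gives the quantity we want to bound from below. Hence the whole argument reduces to showing that $\int_\Omega(\|\vec\na\vec u\|^2 - |\textrm{curl}\,\vec u|^2)\,dx \ge 0$ for every $\vec u \in H_0^1(\Omega)^3$.

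For this, I would establish the sharper identity
$$
  \int_\Omega \|\vec\na\vec u\|^2\,dx = \int_\Omega |\textrm{curl}\,\vec u|^2\,dx + \int_\Omega |\diver\vec u|^2\,dx,
$$
which trivially implies the needed inequality because the divergence term is nonnegative. To prove it I would work first with $\vec u \in C_c^\infty(\Omega)^3$ and expand both sides componentwise using $\vec\na = (\pa_1,\pa_2,0)$; after cancellation the difference reduces (up to a factor $2$) to $\int_\Omega(\pa_1 u_2\,\pa_2 u_1 - \pa_1 u_1\,\pa_2 u_2)\,dx$. Two successive integrations by parts --- moving $\pa_1$ off $u_2$ and then $\pa_2$ off the resulting factor --- show that $\int_\Omega \pa_1 u_2\,\pa_2 u_1\,dx = \int_\Omega \pa_1 u_1\,\pa_2 u_2\,dx$, so the difference vanishes. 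A standard density argument then extends the identity to all of $H_0^1(\Omega)^3$.

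Putting the two ingredients together gives
$$
  \int_\Omega G[\vec u]\,dx = \int_\Omega|\diver\vec u|^2\,dx + \int_\Omega|\vec u+\textrm{curl}\,\vec u|^2\,dx + \int_\Omega|\vec u|^2\,dx \ge \int_\Omega|\vec u|^2\,dx,
$$
so the lemma will hold with the explicit (and in fact $\Omega$-independent) constant $c_G = 1$. The only place where care is needed is the Helmholtz-type identity in this mixed 2D--3D setting, where $\Omega\subset\R^2$ but $\vec u$ has three components and $\vec\na$ has vanishing third component: one has to keep track of which derivatives actually appear and to invoke the homogeneous Dirichlet condition precisely at the moment of the cross integrations by parts. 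Apart from this bookkeeping, the argument is routine, and no Poincaré inequality or spectral information about $\Omega$ is required.
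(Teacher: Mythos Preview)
Your argument is correct, and it is genuinely different from the paper's. The paper proceeds spectrally: it introduces the shifted bilinear form $B_\mu(\vec u,\vec v)=\int_\Omega(\na\vec u:\na\vec v+\vec u\cdot\textrm{curl}\,\vec v+\vec v\cdot\textrm{curl}\,\vec u+(2+\mu)\vec u\cdot\vec v)\,dx$, applies Lax--Milgram and the Hilbert--Schmidt theorem to the associated compact solution operator, and then shows by contradiction that the bottom eigenvalue satisfies $\lambda_1^{-1}>\mu$, yielding $c_G=\lambda_1^{-1}-\mu$. This constant is nonexplicit and, as stated in the lemma, depends on $\Omega$.

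Your route is more elementary and sharper. The pointwise decomposition $G[\vec u]=(\|\na\vec u\|^2-|\textrm{curl}\,\vec u|^2)+|\vec u+\textrm{curl}\,\vec u|^2+|\vec u|^2$ is a clean completion of the square, and the Helmholtz-type identity $\int_\Omega\|\na\vec u\|^2\,dx=\int_\Omega|\textrm{curl}\,\vec u|^2\,dx+\int_\Omega|\diver\,\vec u|^2\,dx$ on $H_0^1(\Omega)^3$ in this mixed 2D--3D setting indeed boils down to the single cross-term identity $\int_\Omega\pa_1 u_2\,\pa_2 u_1\,dx=\int_\Omega\pa_1 u_1\,\pa_2 u_2\,dx$ for $u_1,u_2\in C_c^\infty(\Omega)$, exactly as you describe. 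The upshot is an explicit, domain-independent constant $c_G=1$, which actually improves on the paper's statement. The paper only uses the pointwise bound $|\textrm{curl}\,\vec u|^2\le 2\|\na\vec u\|^2$ (its estimate \eqref{G.pos}) and never exploits the integration-by-parts gain coming from the Dirichlet condition; that is precisely what your Helmholtz identity captures, and what allows you to avoid any spectral machinery.
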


\begin{proof}
Let $\mu>0$ and consider the following bilinear form on $H_0^1(\Omega)^3$:
$$
  B_\mu(\vec u,\vec v) = \int_\Omega\big(\na \vec u:\na\vec v
  + \vec u\cdot\textrm{curl}\,\vec v + \vec v\cdot\textrm{curl}\,\vec u
  + (2+\mu)\vec u\cdot\vec v\big)dx,
$$
where $\na \vec u:\na\vec v=\sum_{j,k=1}^3\pa_j u_k\pa_j v_k$.
The bilinear form $B$ is symmetric, continuous, and coercive on $H_0^1(\Omega)^3$,
since, using \eqref{G.pos} and the Poincar\'e inequality,
$$
  B_\mu(\vec u,\vec u) = G[\vec u]+\mu\int_\Omega|\vec v|^2 dx
  \ge \mu\int_\Omega|\vec v|^2 dx \ge c\mu\|\vec v\|_{H_0^1(\Omega)^3}.
$$
Hence, by the Lax-Milgram lemma,
for all $\vec f\in L^2(\Omega)^3$, there exists a unique solution
$\vec u\in H_0^1(\Omega)^3$ to
$$
  B(\vec u,\vec v) = \int_\Omega \vec f\cdot\vec v dx, \quad \vec v\in H_0^1(\Omega)^3.
$$
This defines the linear operator $L:L^2(\Omega)^3\to L^2(\Omega)^3$, $L(\vec f)=\vec u$.
Since the range of $L$, $R(L)$, is a subset of $H_0^1(\Omega)^3$, which embeddes
compactly into $L^2(\Omega)^3$, $L$ is compact. Furthermore, $L$ is symmetric
(since $B$ is symmetric) and positive in the sense of $\int_\Omega L(\vec f)\cdot
\vec f dx > 0$ for all $\vec f\neq 0$ (since $B$ is coercive).
By the Hilbert-Schmidt theorem for symmetric compact operators
(see, e.g., \cite[Theorem VI.16]{ReSi80}), there exists a complete orthonormal
system $(\vec u^{(k)})$ of $L^2(\Omega)$ of eigenfunctions of $L$,
$$
  L(\vec u^{(k)}) = \lambda_k \vec u^{(k)}, \quad 0<\lambda_k\searrow 0
  \mbox{ as }k\to\infty.
$$
Note that $(\lambda_k)$ depends on $\mu$ since $L$ and $B$ do so.
In particular, $\vec u^{(k)}\in R(L)\subset H_0^1(\Omega)^3$ and
\begin{equation}\label{B.aux}
  B(\vec u^{(k)},\vec v) = \lambda_k^{-1}\int_\Omega \vec u^{(k)}\cdot\vec v dx
  \quad\mbox{for }\vec v\in H_0^1(\Omega)^3,\ k\in\N.
\end{equation}

We claim that $\lambda_1^{-1}>\mu$. Otherwise, if $\lambda_1^{-1}\le\mu$,
the definition of $B$ and \eqref{B.aux} yield
\begin{align*}
  \int_\Omega\big( & 
  \|\na\vec u^{(1)}\|^2 + 2\vec u^{(1)}\cdot\textrm{curl}\,\vec u^{(1)}
  + (2+\mu)|\vec u^{(1)}|^2\big)dx \\
  &= B(\vec u^{(1)},\vec u^{(1)})
  = \lambda_1^{-1}\int_\Omega|\vec u^{(1)}|^2 dx
  \le \mu\int_\Omega|\vec u^{(1)}|^2 dx.
\end{align*}
The terms containing $\mu$ cancel, which gives
$$
  \int_\Omega\big(\|\na\vec u^{(1)}\|^2 + 2\vec u^{(1)}\cdot\textrm{curl}\,\vec u^{(1)}
  + 2|\vec u^{(1)}|^2\big)dx \le 0.
$$
However, in view of \eqref{G.pos}, the integral is nonnegative and hence, 
it must vanish. Therefore,
\begin{equation}\label{aux1}
  \|\na\vec u^{(1)}\|^2 + 2\vec u^{(1)}\cdot\textrm{curl}\,\vec u^{(1)}
  + 2|\vec u^{(1)}|^2 = 0 \quad\mbox{a.e. in }\Omega.
\end{equation}
On the other hand, using $|\textrm{curl}\,\vec u^{(1)}|^2 \le 2\|\na\vec u^{(1)}\|^2$,
$$
  0 \ge \frac12|\textrm{curl}\,\vec u^{(1)}|^2 
  + 2\vec u^{(1)}\cdot\textrm{curl}\,\vec u^{(1)} + 2|\vec u^{(1)}|^2 
  = \left|\frac{1}{\sqrt{2}}\textrm{curl}\,\vec u^{(1)} + \sqrt{2}\vec u^{(1)}
  \right|^2,
$$
from which we infer that $\textrm{curl}\,\vec u^{(1)} = 2\vec u^{(1)}$
and, by \eqref{aux1}, $ \|\na\vec u^{(1)}\|^2 + 6|\vec u^{(1)}|^2 = 0$.
This implies that $\vec u^{(1)}=0$ which is absurd. Hence, $\lambda_1^{-1}>\mu$.

Now let $\vec u\in H_0^1(\Omega)^3\cap H^2(\Omega)^3$. 
We can decompose $\vec u$
in the orthonormal set $(\vec u^{(k)})$, $\vec u=\sum_{k\in\N}c_k \vec u^{(k)}$
for some $c_k\in\R$. It follows from \eqref{B.aux} and the orthogonality
of $(\vec u^{(k)})$ on $L^2(\Omega)^3$ that
$$
  B(\vec u,\vec u) \ge \sum_{k\in\N}c_k^2 B(\vec u^{(k)},\vec u^{(k)})
  = \sum_{k\in\N}c_k^2\lambda_k^{-1} \ge \lambda_1^{-1}\|\vec u\|_{L^2(\Omega)^3}^2.
$$
By a density argument, this inequality also holds for all 
$\vec u\in H_0^1(\Omega)^3$. Therefore,
$$
  \int_\Omega G[\vec u]dx = B_\mu(\vec u,\vec u) - \mu\int_\Omega|\vec u|^2 dx
  \ge (\lambda^{-1}-\mu)\|\vec u\|_{L^2\Omega)^3}^2,
$$
and the lemma follows with $c_G=\lambda_1^{-1}-\mu>0$.
\end{proof}

\begin{proof}[Proof of Theorem \ref{thm.long}.]
Let $p\ge 2$. Using the test function $p|\vec n|^{p-2}n_j\in L^2(0,T;H_0^1(\Omega))$ 
in \eqref{3.eq.nj}
and summing over $j=1,2,3$, we find after elementray computations that
\begin{align}\label{eq.np}
  \frac{d}{dt}\int_\Omega & |\vec n|^p dx 
  + \frac{4(p-2)}{p}\int_\Omega\big|\na|\vec n|^{p/2}\big|^2 dx \\
  &= -2(p-1)\int_\Omega|\vec n|^{p/2}\na V\cdot\na|\vec n|^{p/2} dx
  - p\int_\Omega |\vec n|^{p-2}G[\vec n]dx. \nonumber
\end{align}

Now, we distinguish two cases. First, let $p>2$. Employing 
Young's inequality with $\alpha>0$ and $G[\vec n]\ge 0$, we find that
\begin{align*}
   \frac{d}{dt}\int_\Omega & |\vec n|^p dx 
  + \frac{4(p-2)}{p}\int_\Omega\big|\na|\vec n|^{p/2}\big|^2 dx \\
  & \le \frac{(p-1)^2}{\alpha}\|\na V\|_{L^\infty(0,T;L^\infty(\Omega))}^2
  \int_\Omega |\vec n|^p dx 
  + \frac{\alpha}{2}\int_\Omega\big|\na|\vec n|^{p/2}\big|^2 dx.
\end{align*}
Then, choosing $\alpha=4(p-2)/p$ and employing the Poincar\'e inequality 
$$
  \int_\Omega\big|\na|\vec n|^{p/2}\big|^2 dx \ge C_P^2\int_\Omega|\vec n|^p dx,
$$
we infer that
$$
  \frac{d}{dt}\int_\Omega |\vec n|^p dx 
  \le \left(\frac{p(p-1)^2}{4(p-2)}\|\na V\|_{L^\infty(0,T;L^\infty(\Omega))}^2
  - \frac{2(p-2)}{p}C_P^2\right)\int_\Omega|\vec n|^{p} dx.
$$
This proves the first part after setting $\eps_p<\sqrt{8}(p-2)C_P/(p(p-1))$
and applying the Gronwall lemma.

For the second part, let $p=2$ in \eqref{eq.np}. By Lemma \ref{lem.G} and
integration by parts in the term containing the potential, we obtain
$$
  \frac{d}{dt}\int_\Omega |\vec n|^2 dx 
  = -\int_\Omega\Delta V|\vec n|^2 dx - 2\int_\Omega G[\vec n]dx
  \le \big(\|\Delta V\|_{L^\infty(0,T;L^\infty(\Omega))} - 2c_G\big)
  \int_\Omega |\vec n|^2 dx.
$$
With the choice $\eps_2<2c_G$ and the Gronwall lemma, the theorem follows.
\end{proof}

If the total space charge $n_0-C(x)=-\lambda_D^2\Delta V$ is positive,
we are able to prove that $\vec n(\cdot,t)$ converges to zero in the
$L^\infty$ norm.

\begin{proposition} 
Let $0<T\le\infty$. The following $L^\infty$ estimate holds:
$$
  \|\vec{n}(\cdot,t)\|_{L^\infty(\Omega)} 
  \le \|\vec{n}_I \|_{L^\infty(\Omega)} 
  \exp\big((\textstyle\sup_{\Omega\times (0,T)} \Delta V)t\big),\quad t\in(0,T).
$$
\end{proposition}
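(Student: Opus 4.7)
The plan is to derive a scalar parabolic inequality for $|\vec n|^2$ and then apply a standard maximum-principle/Stampacchia argument. The key observation is that all of the ingredients are already present in Step 2 of the proof of Theorem \ref{thm.vecn}: it was shown there that $\vec n\cdot\pa_t\vec n$ can be expressed in divergence form plus a sign-definite remainder involving $G[\vec n]$. My starting point is therefore the identity
\begin{equation*}
  \pa_t|\vec n|^2 = \Delta|\vec n|^2 + 2\diver(|\vec n|^2\na V) - \na V\cdot\na|\vec n|^2 - 2G[\vec n],
\end{equation*}
which, after expanding the divergence, becomes
\begin{equation*}
  \pa_t|\vec n|^2 - \Delta|\vec n|^2 - \na V\cdot\na|\vec n|^2
  = 2(\Delta V)|\vec n|^2 - 2G[\vec n].
\end{equation*}

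Since $G[\vec n]\ge 0$ by \eqref{G.pos}, and writing $c_0:=\sup_{\Omega\times(0,T)}\Delta V$, this yields the differential inequality
\begin{equation*}
  \pa_t|\vec n|^2 - \Delta|\vec n|^2 - \na V\cdot\na|\vec n|^2 \le 2 c_0|\vec n|^2.
\end{equation*}
The natural substitution $\tilde\phi := |\vec n|^2 e^{-2c_0 t}$ absorbs the reaction term, producing
\begin{equation*}
  \pa_t\tilde\phi - \Delta\tilde\phi - \na V\cdot\na\tilde\phi \le 0 \quad\mbox{in }\Omega_T,
\end{equation*}
with $\tilde\phi=0$ on $\pa\Omega$ (thanks to the boundary condition $\vec n=0$) and $\tilde\phi(\cdot,0)=|\vec n_I|^2\le M:=\|\vec n_I\|_{L^\infty(\Omega)}^2$.

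I then apply the standard Stampacchia truncation: test the inequality against $(\tilde\phi - M)^+\in L^2(0,T;H_0^1(\Omega))$ (vanishing on $\pa\Omega$ because $\tilde\phi\le 0$ there and $M\ge 0$). Integration by parts in the drift term, using $((\tilde\phi - M)^+)^2$ as the primitive, yields
\begin{equation*}
  \tfrac12\tfrac{d}{dt}\int_\Omega \big((\tilde\phi - M)^+\big)^2 dx
  + \int_\Omega |\na(\tilde\phi - M)^+|^2 dx
  \le -\tfrac12\int_\Omega \Delta V\,\big((\tilde\phi - M)^+\big)^2 dx,
\end{equation*}
so a Gronwall argument (absorbing the $\Delta V$ term into the exponential), combined with $(\tilde\phi - M)^+(\cdot,0)=0$, forces $(\tilde\phi - M)^+\equiv 0$. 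Translating back gives $|\vec n(x,t)|^2\le \|\vec n_I\|_{L^\infty(\Omega)}^2\,e^{2c_0 t}$, which is the claimed inequality after taking square roots.

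The only subtlety I anticipate is regularity: the computation leading to the pointwise identity for $\pa_t|\vec n|^2$ is performed in the sense of distributions, as in Theorem \ref{thm.vecn}, and for the Stampacchia step one needs $\tilde\phi\in L^2(0,T;H^1(\Omega))$ with $\pa_t\tilde\phi\in L^2(0,T;H^{-1}(\Omega))$, which follows from the regularity of $\vec n$ furnished by Theorem \ref{thm.vecn} together with the $L^\infty$ bound already proved in its Step 2. Everything else is routine manipulation.
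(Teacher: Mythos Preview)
Your proof is correct, but it follows a genuinely different route from the paper's own argument. The paper does not work with the scalar equation for $|\vec n|^2$ at all; instead it recycles the $L^p$ energy identity \eqref{eq.np} established in the proof of Theorem~\ref{thm.long}, drops the nonnegative terms $\tfrac{4(p-2)}{p}\int|\na|\vec n|^{p/2}|^2$ and $p\int|\vec n|^{p-2}G[\vec n]$, integrates the drift term by parts to get
\[
  \frac{d}{dt}\int_\Omega |\vec n|^p\,dx \le (p-1)\sup_{\Omega_T}\Delta V\int_\Omega|\vec n|^p\,dx,
\]
applies Gronwall to obtain $\|\vec n(\cdot,t)\|_{L^p}\le\exp\big((1-\tfrac1p)(\sup\Delta V)\,t\big)\|\vec n_I\|_{L^p}$, and then sends $p\to\infty$. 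Your approach, by contrast, goes straight for the $L^\infty$ bound via the distributional equation for $|\vec n|^2$ (taken from Step~2 of Theorem~\ref{thm.vecn}), the substitution $\tilde\phi=|\vec n|^2e^{-2c_0 t}$, and a Stampacchia cut-off. Your route is arguably more direct---it avoids the $p\to\infty$ passage and yields the sharp constant in one stroke---while the paper's route has the advantage of requiring no new computation beyond what was already done for Theorem~\ref{thm.long}. Both arguments ultimately hinge on the same two facts: $G[\vec n]\ge 0$ and $\Delta V\in L^\infty(\Omega_T)$ via the Poisson equation.
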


\begin{proof}
Let $2<p<\infty$ be arbitrary. From \eqref{eq.np} we deduce that, 
by integrating by parts,
$$
  \frac{d}{dt} \int_\Omega |\vec{n}|^p dx 
  \le (p-1)\sup_{\Omega_T} \Delta V\int_\Omega |\vec{n}|^p dx.
$$
Therefore 
$$
  \|\vec{n}(\cdot,t)\|_{L^p(\Omega)} 
  \le \exp\big((1-\tfrac{1}{p})
  (\textstyle\sup_{\Omega_T} \Delta V) t\big)\|\vec{n}_I\|_{L^p(\Omega)},
  \quad t\in(0,T).
$$
Passing to the limit $p\to\infty$ in this inequality yields the claim.
\end{proof}

%%%%%%%%%%%%%%%%%%%%%%%%%%%%%%%%%%%%%%%%%%%%%%%%%%%%%%%%%%%%%%%%%%%%%%%%%%%%%

\section{Numerical simulations}\label{sec.num}

In this section, we present some numerical results for the models
\eqref{1.eq.n0}-\eqref{1.eq.V} and \eqref{2.eq.n}-\eqref{2.eq.j}, \eqref{1.eq.V}
in one space dimension, $\Omega=(0,1)$. We choose the boundary conditions
$$
  n_0=C, \quad \vec n=0, \quad V=U \quad\mbox{on }\pa\Omega=\{0,1\},\ t>0,
$$
where $U(x)=V_Ax$ and $V_A=80$ is the scaled applied potential, 
and the initial conditions
$$
  n_0(x,0) = \exp(-V_{\rm eq}(x)), \quad \vec n(x,0)=0,
$$
where $V_{\rm eq}$ is the equilibrium potential, defined by
$$
  -\lambda_D^2\pa_{xx}^2 V_{\rm eq} = \exp(-V_{\rm eq})-C(x)\quad\mbox{in }\Omega,
  \quad V_{\rm eq}(0) = V_{\rm eq}(1)=0.
$$
We choose $\lambda_D^2=10^{-3}$.
The doping profile corresponds to that of a ballistic diode:
$$
  C(x)=C_{\rm min}\quad\mbox{for }\bar x<x<1-\bar x, \quad
  C(x)=1\quad\mbox{else},
$$
where $C_{\rm min}=0.025$ and $\bar x=0.2$. 
The pseudo-spin polarization and the direction of the local magnetization
are defined by
$$
  \zeta = 0.5, \quad \vec\omega = (0,0,1)^\top.
$$
Table \ref{table} shows the values of the units which allows for the
computation of the physical values from the scaled ones.

\begin{table}[ht]
\begin{tabular}{|l|l|}
\hline
space unit & $10^{-7}$\,m\\
\hline
time unit & $0.5\times 10^{-13}$\,s\\
\hline
voltage unit & $1.25\times 10^{-2}$\,V\\
\hline
particle density unit & $10^{17}$\,m$^{-2}$\\
\hline
current density unit & $2\times 10^{23}$\,m$^{-1}$ s$^{-1}$\\
\hline
\end{tabular}\vskip2mm
\caption{Units used for the numerical simulations.}\label{table}
\end{table}

Models QSDE1 \eqref{1.eq.n0}-\eqref{1.eq.V} and QSDE2 \eqref{2.eq.n}-\eqref{2.eq.j},
\eqref{1.eq.V} with the corresponding initial and boundary conditions
\eqref{3.bc}-\eqref{3.ic} are discretized with the Crank-Nicolson finite-difference
scheme and the space step $\triangle x = 10^{-2}$. 
The resulting nonlinear discrete ODE system is solved by using the 
Matlab routine ode23s. 

Since the initial spin vector is assumed to vanish, the particle density $n_0$,
computed from the model QSDE1, corresponds exactly to the particle density 
of the standard drift-diffusion model, and the spin vector vanishes for all time.
This situation is different in the model QSDE2 since the equations are fully
coupled. For the model QSDE2, Figure \ref{fig.moments} shows the particle density 
$n_0$ and the components $n_j$ of the spin vector versus position at various times.
The solution at $t=1$ corresponds to the steady state.
We observe a charge built-up of $n_0$ in the low-doped region of the diode.
The spin vector components vary only slightly in this region but their gradients
are significant in the high-doped regions close to the contacts.
Clearly, the components $n_j$ do not need to be positive and, in fact, they
even do not have a sign.

\begin{figure}[ht]
\begin{center}
\includegraphics[scale=0.77]{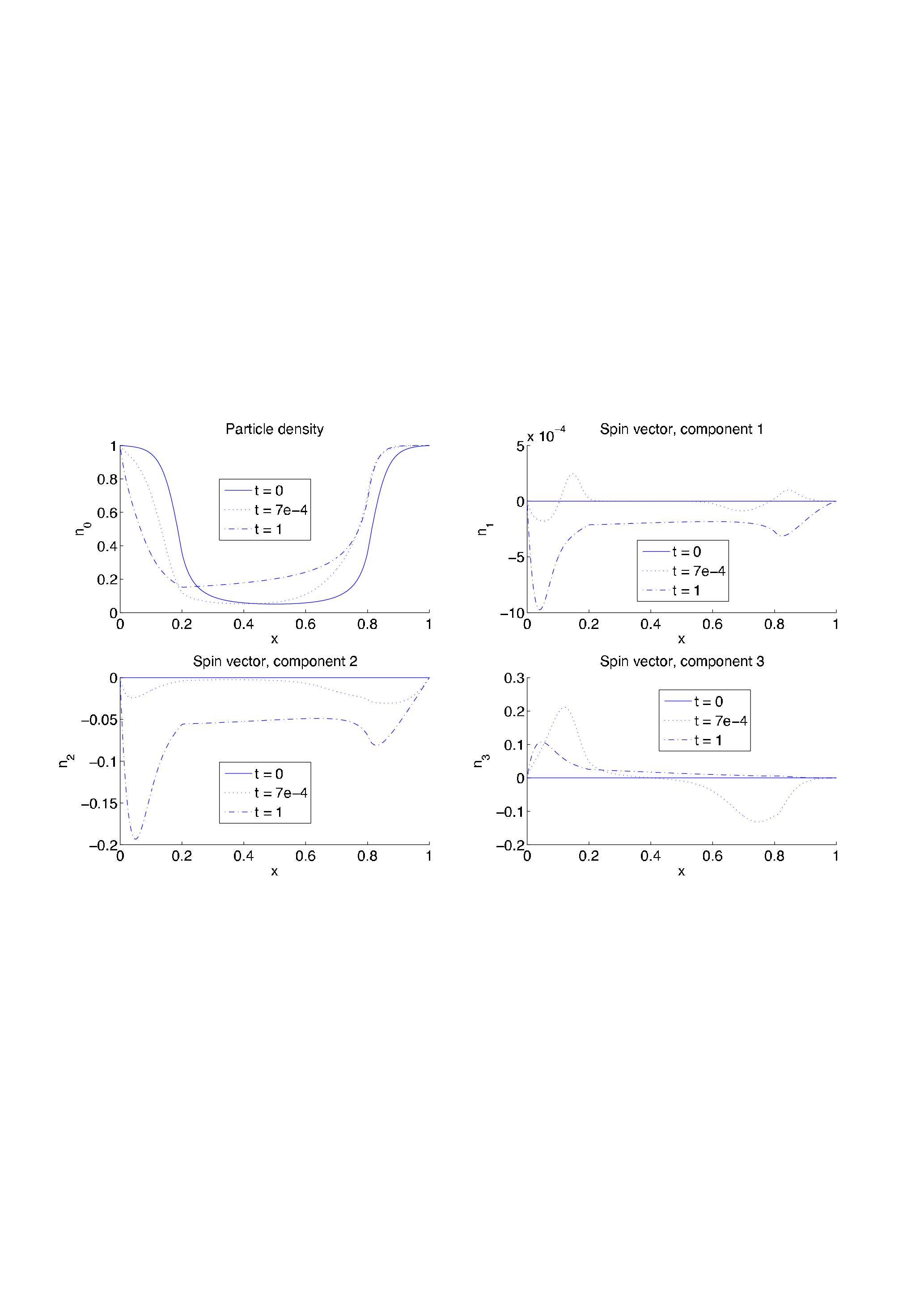}
\caption{Model QSDE2: Particle density and components of the spin vector versus position
at times $t=0$, $t=7\cdot 10^{-4}$, and $t=1$.}\label{fig.moments}
\end{center}
\end{figure}

The models QSDE1 and QSDE2 are well defined only if $|\vec n|/n_0<1$.
We plot this ratio in Figure \ref{fig.ratios} at various times for the model QSDE2.
In all the presented cases, the quotient stays below one. This indicates that
$b_k[\vec n/n_0]$ is well defined also in this model.

\begin{figure}[ht]
\centering
\includegraphics[scale=0.77]{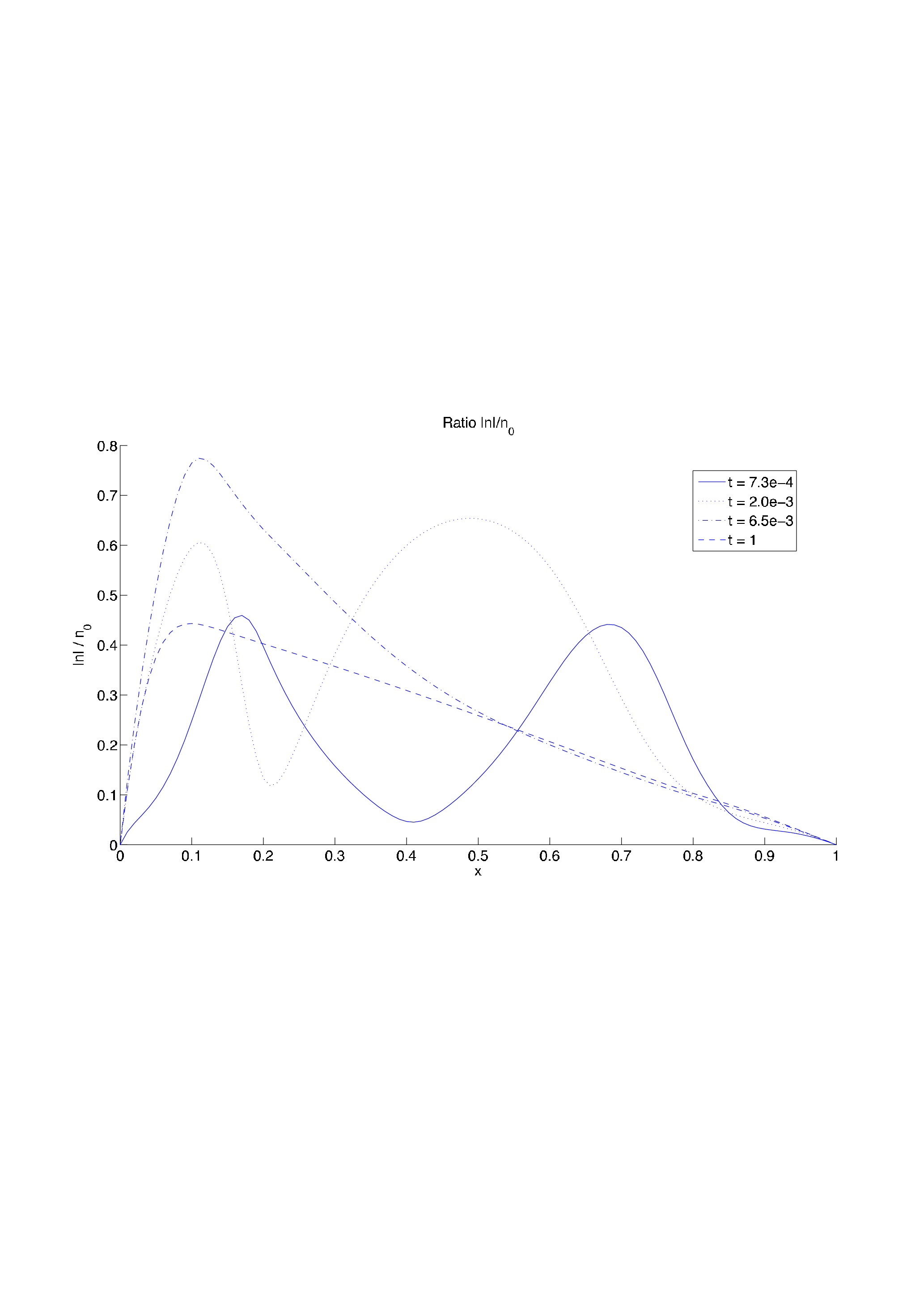}
\caption{Model QSDE2: Ratio $|\vec n|/n_0$ versus position at various times.}
\label{fig.ratios}
\end{figure}

We have shown in Theorem \ref{thm.long} that the spin vector of the model
QSDE1 converges to zero if the electric potential satisfies certain conditions.
In Figure \ref{fig.equi}, the relative difference 
$\|n_0(t)-n_0(\infty)\|_2/\|n_0(\infty)\|_2$
versus time is depicted (semilogarithmic plot), 
where $n_0(\infty)$ denotes the steady-state particle
density of model QSDE1 or QSDE2, respectively. The norm $\|\cdot\|_2$ is the
Euclidean norm. The stationary solution is approximated by $n_0(t^*)$ with
$t^*=1$. Whereas the decay of the solution to the model QSDE1 is numerically
of exponential type (in agreement with the theoretical results),
the decay for the model QSDE2 seems to be exponential only for small times.

\begin{figure}[ht]
\centering
\includegraphics[scale=0.77]{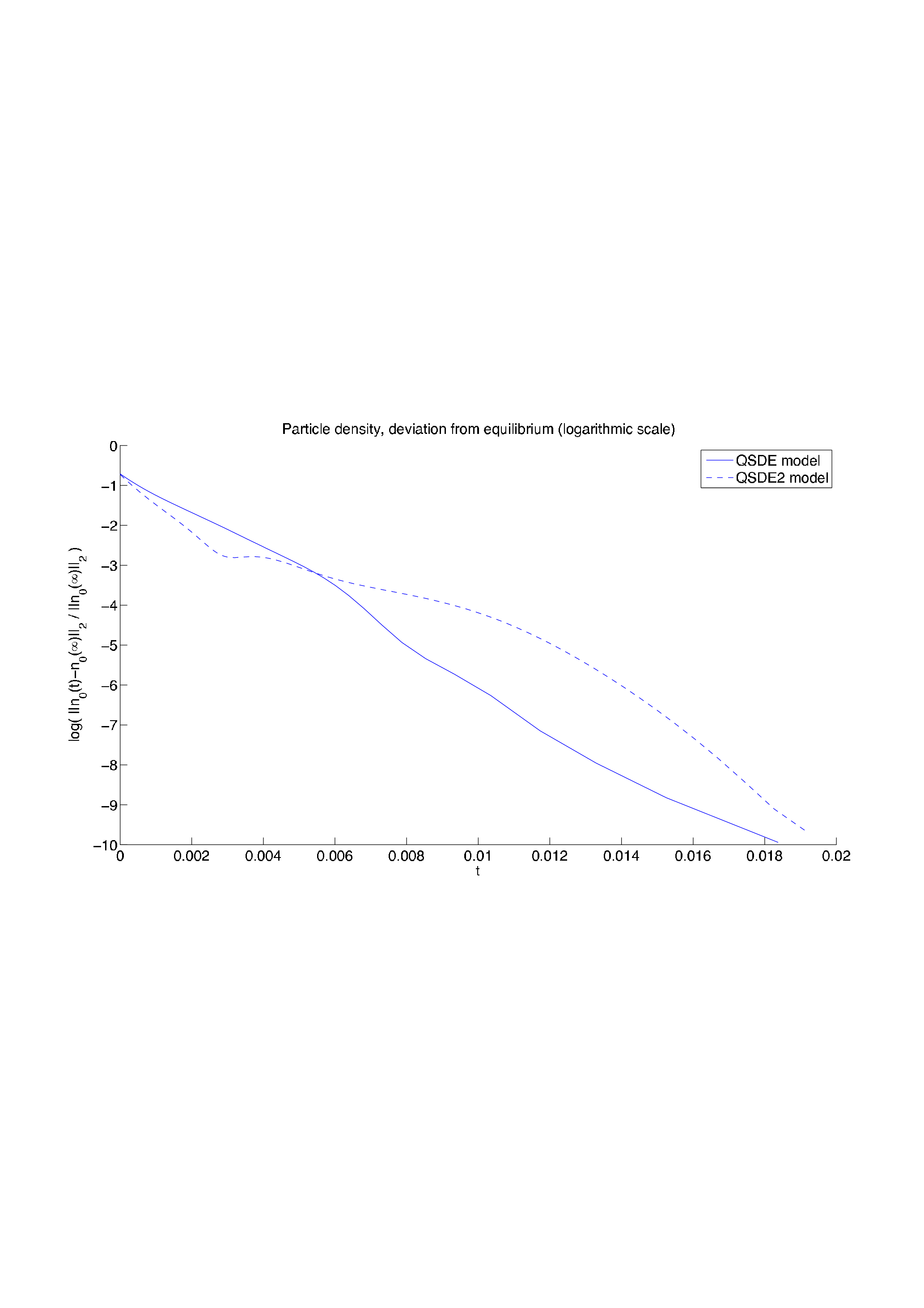}
\caption{Relative difference $\|n_0(t)-n_0(\infty)\|/\|n_0(t)\|$ versus time
(semilogarithmic plot) for the models QSDE1 (solid line) and QSDE2 (dashed line).}
\label{fig.equi}
\end{figure}

\begin{figure}[ht]
\centering
\includegraphics[scale=0.77]{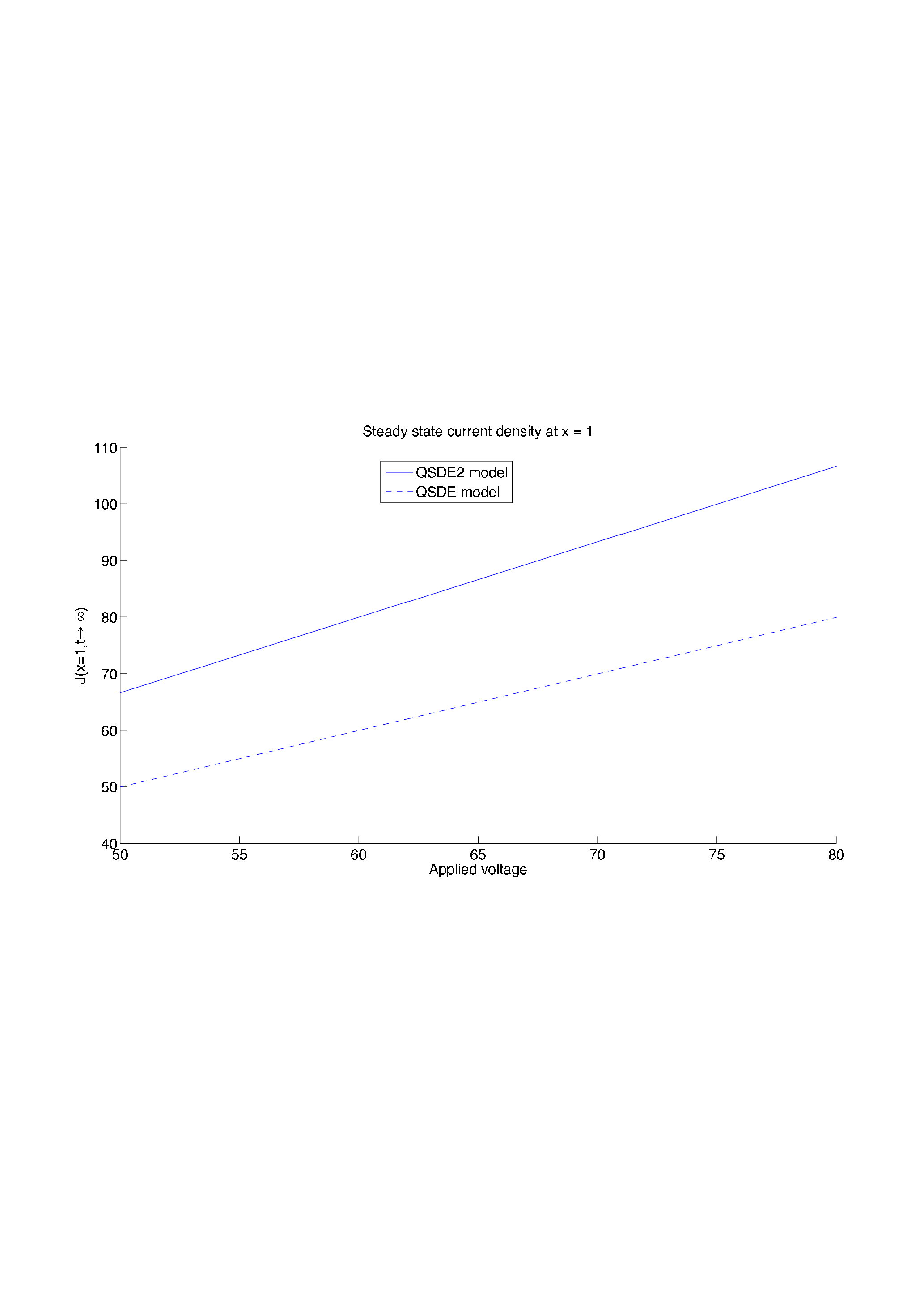}
\caption{Static current-voltage characteristics for the models QSDE1 and QSDE2.}
\label{fig.cv}
\end{figure}

In the final Figure \ref{fig.cv}, we present the current-voltage characteristics 
for the models QSDE1 and QSDE2, i.e.\ the relation between $J_0$ at $x=1$
and the applied bias $V_A$. The characteristics of model QSDE1 correspond
to the current-voltage curve of the standard drift-diffusion model. We observe
that the additional terms in the definition of $J_0$ lead to an
increase of the particle current density.

%%%%%%%%%%%%%%%%%%%%%%%%%%%%%%%%%%%%%%%%%%%%%%%%%%%%%%%%%%%%%%%%%%%%%%%%%%%%%

\end{document}